\title{Post-selection-free preparation of high-quality physical qubits}
\author{Ben Barber}
\email{ben.barber@riverlane.com}
\author{Neil I. Gillespie}
\email{neil.gillespie@riverlane.com}
\author{J. M. Taylor}
\email{jake.taylor@riverlane.com}
\thanks{current address Joint Center for Quantum Information and Computer Science}
\affiliation{Riverlane, Cambridge, UK}
\date{}
\newtheorem{theorem}{Theorem}
\newtheorem{lemma}[theorem]{Lemma}
\newtheorem{proposition}[theorem]{Proposition}
\newcommand{\ket}[1]{|#1\rangle}
\newcommand{\defn}{\emph}
\newcommand{\prep}{p_0}
\newcommand{\idle}{p_I}
\newcommand{\cnot}{p_C}
\newcommand{\toffoli}{p_T}
\newcommand{\pout}{p_{\text{out}}}
\newcommand{\threshold}{\theta}
\newcommand{\powerset}{\mathcal P}
\DeclareMathOperator{\Sym}{Sym}
\DeclareMathOperator{\Alt}{Alt}
\begin{document}

\maketitle

\begin{abstract}
Rapidly improving gate fidelities for coherent operations mean that errors in state preparation and measurement (SPAM) may become a dominant source of error for fault-tolerant operation of quantum computers. This is particularly acute in superconducting systems, where tradeoffs in measurement fidelity and qubit lifetimes have limited overall performance.
Fortunately, the essentially classical nature of preparation and measurement enables a wide variety of techniques for improving quality using auxiliary qubits combined with classical control and post-selection. In practice, however, post-selection greatly complicates the scheduling of processes such as syndrome extraction. 
Here we present a family of quantum circuits that prepare high-quality $\ket 0$ states without post-selection, instead using CNOT and Toffoli gates to non-linearly permute the computational basis.
We find meaningful performance enhancements when two-qubit gate fidelities errors go below 0.2\%, and even better performance when native Toffoli gates are available.
\end{abstract}

\noindent
Key physical implementations of quantum computers now reliably achieve two-qubit gate fidelities approaching 99.8\%~\cite{PhysRevX.11.021058}, with multiple systems reporting fidelities ranging from 99.2\% to 99.6\%~\cite{moskalenko2022high,PhysRevLett.125.150505,erhard2019characterizing,Google2019}.
This leaves SPAM (state preparation and measurement) errors as a dominant source of error in many, particularly superconducting, machines. 
Google's Sycamore processor experienced measurement error rates greater than $3$\%~\cite{Google2019}, and IBM report similar levels of measurement error~\cite{ibm_2020} in their devices.
Recent improvements to standard superconducting measurement error rates have not yet reduced them below 1\%~\cite{PhysRevX.11.011027}, although more is possible using multilevel qubit encodings~\cite{PhysRevX.10.011001}.
In contrast, for ions, neutral atoms, and spins in silicon measurement error rates can be much lower (cf. IonQ's 99.3\%~\cite{wright2019benchmarking} and HRL's 99.75\%~\cite{PRXQuantum.3.010352}), though achieving low rates in dense arrays remain a substantial challenge~\cite{PhysRevLett.127.050501}. 

There are three broad approaches to reducing the impact of SPAM errors.
The first is to improve the physical processes of preparation and measurement.
In most architectures there are tradeoffs between SPAM errors and speed, as longer integration or preparation times can improve performance but take longer; see for example~\cite{PhysRevX.10.011001} for the benefits of measuring slowly up to $T_1$ limits. 
The second is error mitigation techniques which calibrate and compensate for errors by running the circuit many times and reconstructing the expected outputs post-facto. 
This is helpful for NISQ applications~\cite{PhysRevLett.119.180509, Geller_2021} but does little to improve the entropy extraction critical for quantum error correction, and the extension of error mitigation into the early fault-tolerant regime~\cite{PRXQuantum.3.010345} requires SPAM errors with similar performance to two-qubit gate fidelities. 
Here we examine the third approach: algorithmically improving the quality of preparation and measurement using quantum logic gates.

Algorithmic improvement for measurement is straightforward.
The simplest example is used in, e.g., the double species ion clock (see~\cite{hume_clock,jake_clock}, and ~\cite{PhysRevLett.123.033201} for a recent example), where one readout qubit is used multiple times to estimate the clock qubit---a simple type of repetition code. 
More generally, CNOT gates can encode a set of measurement outcomes in a classical error correcting code on auxiliary measurement qubits, allowing recovery from sufficiently few measurement errors~\cite{PhysRevA.105.012419}. 
As these codes have very good performance and efficient decoders, we recommend their use but do not elaborate on them further here.

A conventional approach to improving preparation using post-selection is the circuit in Figure~\ref{fig:measurement-based}.
We prepare two qubits in state $\ket 0$, apply a CNOT from the first to the second, then measure the second in the computational basis.
If the measurement result is $0$, we output the first qubit as a high-quality $\ket 0$.
If not, we reject and begin the process again. 

\begin{figure}
\centering
  \subcaptionbox{post-selection\label{fig:measurement-based}}[0.4\linewidth]
{
\begin{tikzpicture}
  \begin{yquant}
    qubit {$q_{\idx}$} q[2];
    cnot q[1] | q[0];
    measure q[1];
  \end{yquant}
\end{tikzpicture}
}
\subcaptionbox{A $(3,1,1)$ post-selection-free purification circuit.\label{fig:(3,1,1)}}[0.4\linewidth]
{
\begin{tikzpicture}
  \begin{yquant}
    qubit {$q_{\idx}$} q[3];
    cnot q[1] | q[0];
    cnot q[2] | q[0];
    cnot q[0] | q[1], q[2];
  \end{yquant}
\end{tikzpicture}
}
    \caption{Two approaches to $\ket 0$ state production.}
    \label{fig:measurement-vs-logic}
\end{figure}
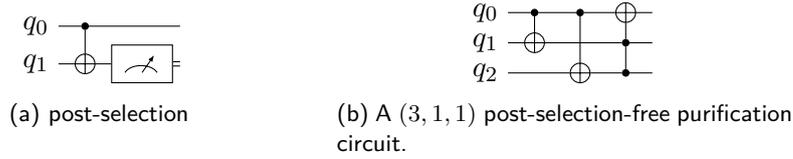

Suppose for now that gates are perfect, but that each preparation incorrectly produces $\ket 1$ with probability $\prep$, and that the measurement outcome is also incorrect with probability $\prep$, with all errors independent.
Then the measurement result is $0$ precisely when there are an even number of errors across the two preparations and one measurement.
We will correctly output $\ket 0$ if there are no errors, or two errors which occur on the second preparation and the measurement.
Otherwise we will either output $\ket 1$ or reject.
Conditional on acceptance, this circuit produces post-selected $\ket 0$ states with an error rate $2\prep^2 + O(\prep^3)$. 

Post-selection is an acceptable method for production of complex logical states such as $T$-magic states, where the scheduling difficulties it presents are the price we pay for implementing a logical non-Clifford gate fault-tolerantly.
However, it is much less satisfactory for the large number of simple physical $\ket 0$ states required for every round of syndrome extraction.
Specific challenges induced by post-selection include
\begin{itemize}
\item the need to schedule multiple attempts in case the first attempt fails; 
\item the need to wait for measurement results and classical processing, which in some systems can take appreciable time;
\item the risk of measurement inducing cross-talk errors on adjacent qubits;
\item limited advantage when measurement errors are substantially worse than preparation errors.
\end{itemize}

We propose an alternative approach avoiding post-selection, with the simplest example shown in Figure~\ref{fig:(3,1,1)}.
This circuit fixes $\ket{000}$, $\ket{010}$ and $\ket{001}$, and takes $\ket{100}$ to $\ket{011}$, so outputs $\ket 0$ on the first wire precisely when there is at most one error on the input.
Thus it produces $\ket 0$ states with slightly higher error rate $3\prep^2 + O(\prep^3)$, but does so without the use of post-selection.

\begin{table}
\centering
\begin{tabular}{ccc}\toprule
& \multicolumn{2}{c}{result} \\\cmidrule(lr){2-3}
total errors & post-selection & purification \\\midrule
$0$ & $\ket 0$ & $\ket 0$  \\
$1$ & reject & $\ket 0$  \\
$2$ & $\begin{cases}\ket 0 &  \text{with probability } 1/3 \\ \ket 1 & \text{with probability } 2/3\end{cases}$ & $\ket 1$  \\
$3$ & reject & $\ket 1$  \\\bottomrule
\end{tabular}
\caption{Comparison of two approaches to $\ket 0$ state production.}
\label{table:first-example}
\end{table}

The outcomes for each circuit are summarised in Table~\ref{table:first-example}.
Our post-selection-free approach has a higher yield of $\ket 0$ states, correctly processing all cases with $1$ error rather than one of the cases with $2$ errors.
It greatly simplifies scheduling of $\ket 0$ production, and avoids delays on architectures where measurement is significantly slower than applying gates.
On the other hand it has higher resource requirements, requiring one additional qubit and more gates, including a physical Toffoli.%
\footnote{A variety of constructions could instead be used to approximate a Toffoli gate up to relative phases.
For example, the circuit 
\begin{tikzpicture}[scale=0.5]
  \begin{yquant}
      qubit {$q_\idx$} q[3];
      h q[0] | q[1];
      zz (q[0,2]);
      h q[0] | q[1];
  \end{yquant}
\end{tikzpicture}
$q_1$-conditionally conjugates the $q_2$-conditional $Z$ gate to an $X$, resulting in an operation differing from the Toffoli with target $q_0$ by $-1$ precisely on the basis state $\ket{101}$.
In our model (see Section~\ref{sec:error-model}) the state is a mixture of computational basis states so these relative phases are unobservable.
}

We call the circuit in Figure~\ref{fig:(3,1,1)} a $(3,1,1)$ purification circuit because it uses three qubits to prepare one high-quality $\ket 0$ and can tolerate one error; we define the notation formally at the beginning of Section~\ref{sec:purification}.
For the $(3,1,1)$ circuit and a selection of other circuits described in more detail later, Figure~\ref{fig:intro-rates} shows the output error rate as a function of the input error rate, assuming that gates depolarise qubits with probability $0.003$ and idle qubits depolarise with probability $0.001$ in each round.
The $(3,1,1)$ circuit improves a 2\% preparation error rate to a 0.5\% error rate and a 1\% preparation error rate to a 0.4\% error rate.

Figure~\ref{fig:threshold-3} shows the thresholds a gate set must meet in order for the circuit of Figure~\ref{fig:(3,1,1)} to improve preparation quality at fixed idle depolarisation rate $0.001$.
Contours correspond to preparation error rates.
If your CNOT and Toffoli depolarisation rates place you to the left of a contour then you obtain an improvement from the $(3,1,1)$ circuit.

\begin{figure}
\centering
\includegraphics{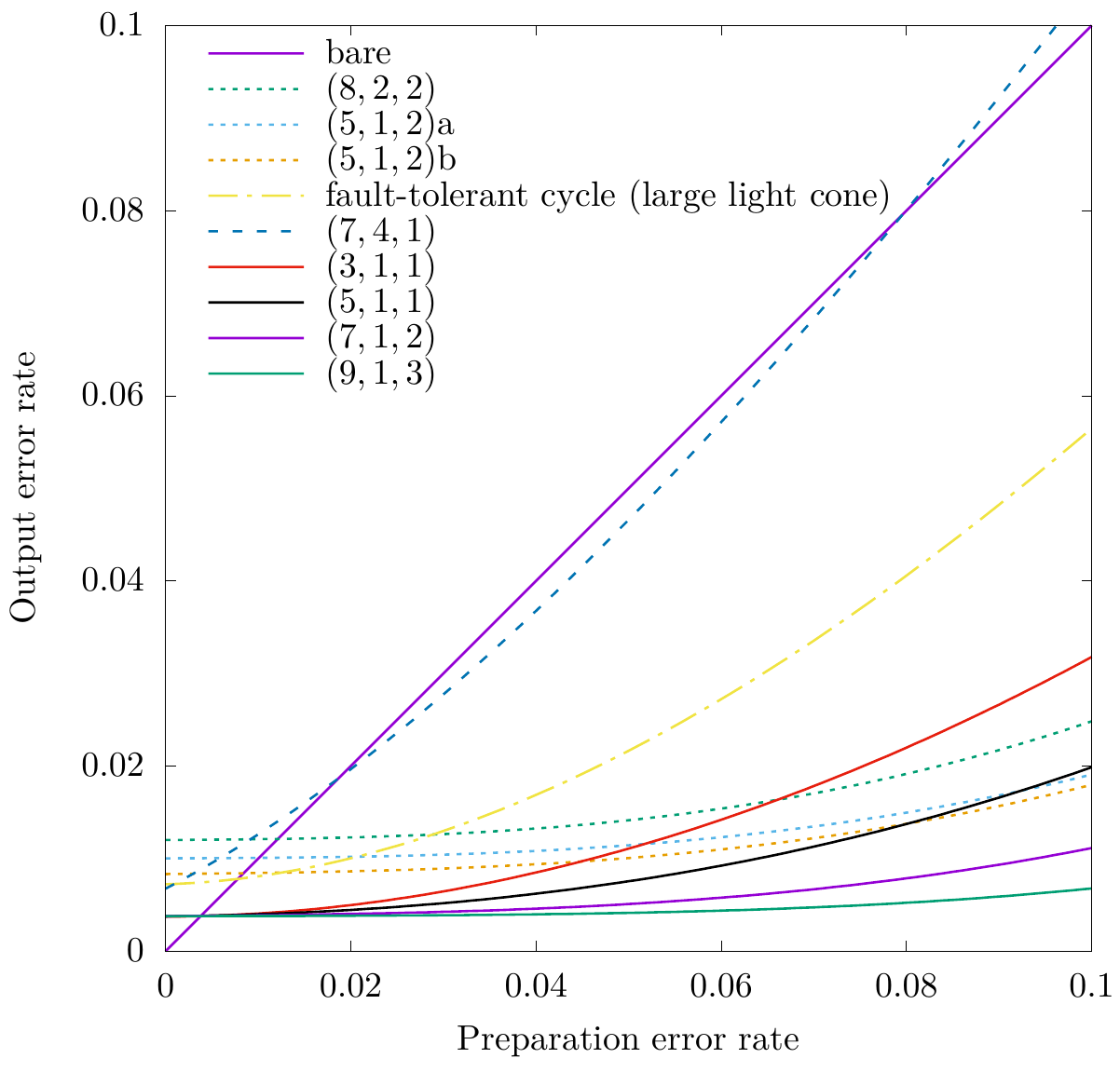}
\caption{Output error rate as a function of input error rate when gates depolarise with probability $0.003$ and idle qubits depolarise with probability $0.001$.  The curve labelled $(3,1,1)$ corresponds to the circuit of Figure~\ref{fig:(3,1,1)}.}
\label{fig:intro-rates}
\end{figure}

\begin{figure}
\centering
\includegraphics{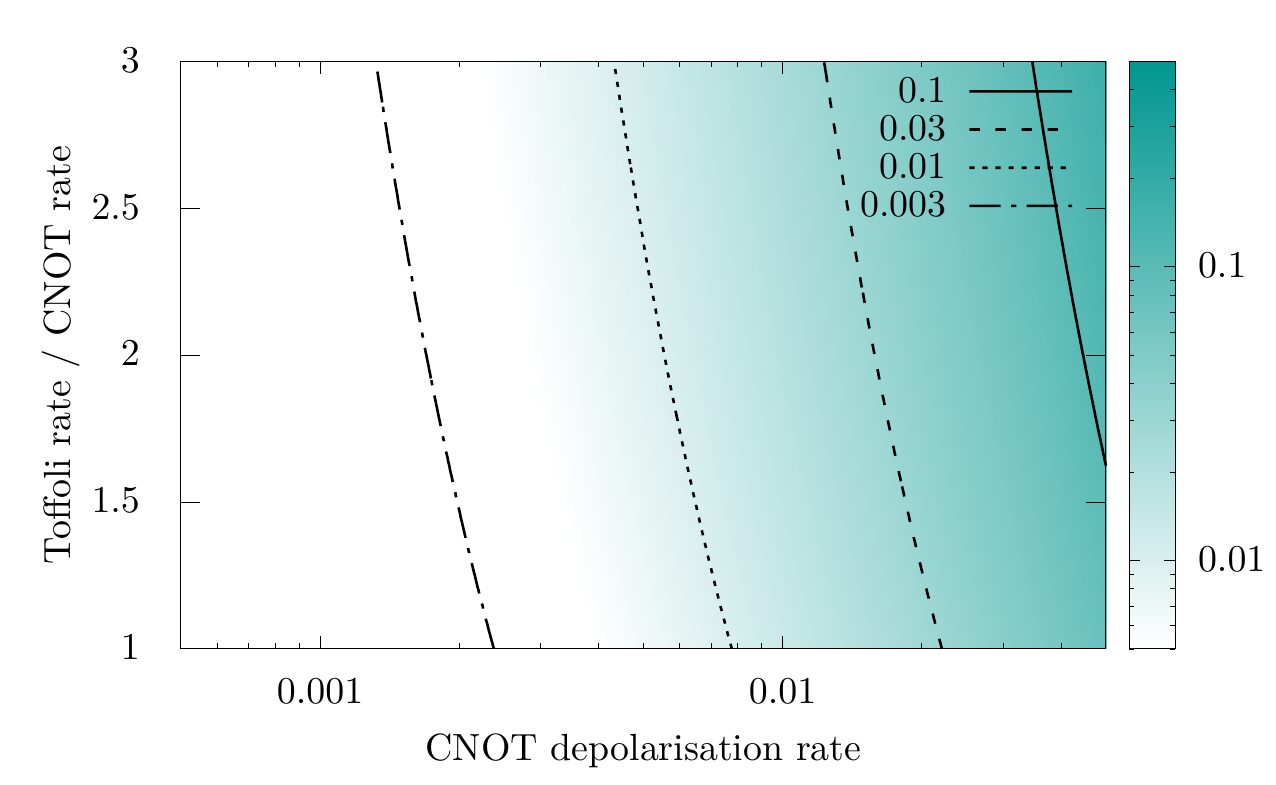}
\caption{Gate thresholds for the $(3,1,1)$ circuit.}
\label{fig:threshold-3}
\end{figure}

In the rest of this paper we propose and assess the performance of a range of circuits analogous to that of Figure~\ref{fig:(3,1,1)}, and quantify the trade-off in which more complicated circuits provide greater protection against preparation errors but are more vulnerable to gate errors.
In Section~\ref{sec:purification} we define and describe the general properties of purification circuits, explain how to find small examples computationally, and consider ways to combine these small examples to form larger circuits.
In Section~\ref{sec:graphs} we describe a particular construction based on graphs inspired by examples discovered in Section~\ref{sec:purification}.
The simplest versions of these circuits have favourable performance and resource requirements and have a natural planar layout.
In Section~\ref{sec:performance} we present data on the performance of these circuits.

Various schemes making repeated applications of the idea behind Figure~\ref{fig:measurement-based} have been studied under the name `algorithmic cooling'~\cite{SchulmanVazirani99,boykin2002algorithmic,practicable,PhysRevA.83.042340,brassard2014prospects,laflamme2022algorithmic}.
Their analysis is typically in terms of entropy, tracking which qubits are `hot' or `cold'.
Our approach is combinatorial, instead tracking what happens to individual error patterns as the circuit permutes the computational basis.

\section{Purification circuits}
\label{sec:purification}

We define an $(n, k, e)$ \defn{purification circuit} (or $(n,k,e)$ \defn{circuit}) to be a CNOT and Toffoli circuit mapping the $n$-long bitstrings of weight at most $e$ into the space of $n$-long bitstrings that are zero in $k$ nominated positions.
If each value in the input is intended to be $0$, but experiences an independent error probability $\prep$ of being flipped to a $1$, then each of the $k$ outputs of an $(n, k, e)$ purification circuit experiences a reduced error rate $O(\prep^{e+1})$. 

The parameter $e$ is defined in terms of protection against adversarial errors and, as with the distance of a (quantum or classical) error correcting code, does not necessarily give a complete picture of the degree of protection against random errors.
The worst case output error rate of an $(n,k,e)$ purification circuit is $\binom n {e+1} \prep^{e+1} + O(\prep^{e+2})$.
In many cases only a small fraction of the sets of $e+1$ preparation errors lead to an error on the output, so the coefficient of $\prep^{e+1}$ will be much smaller.
This applies, for example, to the family of circuits that we consider in Section~\ref{sec:graphs}.

\subsection{Error model}
\label{sec:error-model}

An $(n,k,e)$ purification circuit can be viewed as a purely classical object operating on classical bitstrings.
To interpret it as a quantum circuit running on noisy hardware we adopt the following error model for preparation, idle and gate errors.

The basic error events are:
\begin{itemize}
\item 
a qubit is incorrectly prepared as $\ket 1$ rather than $\ket 0$ with probability $\prep$;

\item
a qubit not involved in the current round of gates depolarises with probability $\idle$;

\item
a CNOT gate depolarises the set of qubits it acts on with probability $\cnot$;

\item
a Toffoli gate depolarises the set of qubits it acts on with probability $\toffoli$.
\end{itemize}
All of these events are independent.

By `a set $Q$ of qubits depolarises' we mean any of the following equivalent things.
\begin{itemize}
\item 
the qubits in $Q$ are replaced by a uniform mixture of the computational basis;

\item
each qubit in $Q$ experiences a Pauli $I$, $X$, $Y$ or $Z$ error chosen uniformly and independently at random;

\item
each qubit in $Q$ experiences a Pauli $X$ error with probability $1/2$ and a Pauli $Z$ error with probability $1/2$, with all choices made independently.
\end{itemize}

Note that the parameterisation for preparation errors differs from that for idle and gate errors; preparing $\ket 1$ rather than $\ket 0$ with probability $\prep<1/2$ corresponds to correctly preparing $\ket 0$ then depolarising with probability $2\prep$.

Since failures in either preparation or application of a gate in this model replace qubits by mixtures of computational basis states, the state of the system at any point is fully described by a probability distribution over the computational basis.
Each probability is a polynomial in $\prep, \idle, \cnot, \toffoli$, which can be computed precisely for circuits of moderate size.
See Appendix~\ref{app:assessing} for more details.

This simple model has the significant advantage of being easy to compute with.
The disadvantage is that it does not capture all possible error processes within a quantum computer.
To give just one example, suppose that your CNOT gate is composed of CZ and Hadamard gates, and that a more accurate error model is that each component gate has an independent probability of depolarising the qubits it acts on.
This is not the same as the combined CNOT gate having some probability of depolarising the qubits it acts on; it doesn't even have the property that a system with this noise model can be described as a mixture of computational basis states.
Fully realistic noise models are of course more complicated again.

\subsection{Existence}
\label{sec:existence}

We can view an $(n,k,e)$ purification circuit as a permutation of $\mathbb F_2^n$ which maps the set of vectors of weight at most $e$ into the set of vectors that vanish in $k$ nominated positions.
This places a size constraint on $n, k, e$.
The necessary condition is also sufficient.

\begin{proposition}\label{prop:existence}
Let $n \geq 1$ and $k, e \in \{1, \ldots, n\}$.
If 
\begin{equation}\label{eqn:space-bound}
    \binom n 0 + \binom n 1 + \cdots + \binom n e \leq 2^{n-k},
\end{equation}
then there is an $(n,k,e)$ purification circuit consisting of  CNOT and Toffoli gates.
\end{proposition}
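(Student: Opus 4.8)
The plan is to build the permutation explicitly and then argue that any permutation of $\mathbb{F}_2^n$ can be realised by CNOT and Toffoli gates. Let $B_e = \{x \in \mathbb{F}_2^n : \mathrm{wt}(x) \le e\}$ be the Hamming ball of radius $e$, so $|B_e| = \binom n 0 + \cdots + \binom n e$. Let $W_k = \{x \in \mathbb{F}_2^n : x_i = 0 \text{ for } i \in S\}$, where $S$ is the set of $k$ nominated positions; then $|W_k| = 2^{n-k}$. The hypothesis \eqref{eqn:space-bound} says exactly $|B_e| \le |W_k|$. Since also $0 \in B_e \cap W_k$ and $B_e \setminus W_k$ and $W_k \setminus B_e$ have the same cardinality difference structure, we can choose an injection $\iota : B_e \hookrightarrow W_k$ that fixes $B_e \cap W_k$ pointwise (map the $|B_e \setminus W_k|$ leftover points of $B_e$ into the $|W_k| - |B_e \cap W_k| \ge |B_e \setminus W_k|$ unused points of $W_k$ arbitrarily). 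Extending $\iota$ to a bijection $\pi$ of all of $\mathbb{F}_2^n$ is then routine: pair up $B_e \setminus W_k$ with the points of $W_k$ that got hit from outside, and let $\pi$ be the identity elsewhere, or more simply just extend any injection between equinumerous complements.

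The substantive step is the second one: \emph{every} permutation of $\mathbb{F}_2^n$ is implementable by a circuit of CNOT and Toffoli gates. This is a classical reversible-computation fact. One clean route is to show these gates generate the alternating group $\Alt(\mathbb{F}_2^n)$ for $n \ge 3$ (they are even permutations, being products of transpositions in even number — a CNOT on $n \ge 3$ wires is a product of $2^{n-2}$ transpositions, and similarly Toffoli), together with one odd permutation such as a single NOT ($X$) gate — which is available since an $X$ on one wire can itself be obtained as a Toffoli with both controls set, but more directly we may simply note that we only need to \emph{reach} $\pi$ from the identity, and if $\pi$ happens to be odd we can absorb the parity by our freedom in choosing $\pi$ (we have slack whenever $|B_e| < |W_k|$, and the boundary case can be handled by a direct small argument or by including $X$ in the gate set, which the paper's circuits already use implicitly). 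I would instead cite the standard result (Toffoli's theorem / the reversible-logic literature) that $\{$CNOT, Toffoli$\}$ — possibly with ancillas, but here none are needed since $n \ge 3$ whenever a Toffoli is required — generates all reversible functions on enough bits, and for $n \le 2$ handle CNOT-only permutations by hand.

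Concretely I would organise it as: (i) reduce to exhibiting the permutation $\pi$; (ii) construct $\pi$ from the counting hypothesis as above; (iii) invoke generation of the symmetric group of $\mathbb{F}_2^n$ by CNOT and Toffoli gates to realise $\pi$ as a circuit. The main obstacle is step (iii): making the generation statement precise enough to be self-contained without a long detour into reversible-circuit synthesis. I expect the cleanest fix is to prove directly that CNOT and Toffoli generate $\Alt(\mathbb{F}_2^n)$ for $n \ge 3$ — using that each such gate is an even permutation and that the group they generate acts $2$-transitively (or contains a transvection-rich subgroup) — and then observe that the permutation $\pi$ we built can always be chosen even by spending one of the unused target points, so that no odd generator is needed. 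For the degenerate small cases ($e$ or $k$ forcing $n \le 2$), \eqref{eqn:space-bound} is restrictive enough that $\pi$ can be written down and realised with a single CNOT or the identity.
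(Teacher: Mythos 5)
Your overall strategy---construct the permutation from the counting hypothesis, then invoke a generation result for the group of CNOT/Toffoli permutations---is the same as the paper's, but your step (iii) as stated contains a genuine error. Every CNOT and Toffoli gate fixes the all-zeros string, so the group they generate lies in the stabiliser of $0$; it is simply false that ``every permutation of $\mathbb{F}_2^n$ is implementable by a circuit of CNOT and Toffoli gates'', and Toffoli's theorem from the reversible-logic literature does not rescue this, because that gate set includes NOT, which is exactly the generator needed to move $0$. The correct statement (the paper's Lemma~\ref{lem:alternating}) is that the group acts on $\mathbb{F}_2^n\setminus\{0\}$ as the alternating group for $n\geq 4$ and as the full symmetric group for $n\leq 3$. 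Your argument is salvageable only because you arranged for $\iota$ to fix $B_e\cap W_k$ pointwise, hence for $\pi$ to fix $0$; that restriction to $\mathbb{F}_2^n\setminus\{0\}$ must be made explicit, since the generation claim you actually invoke is unavailable.

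The second defect is the parity repair. For $n\geq 4$ only even permutations of $\mathbb{F}_2^n\setminus\{0\}$ are reachable, and both of your proposed fixes are inadequate: ``including $X$ in the gate set'' is not permitted (the paper's circuits use no $X$ gates, implicitly or otherwise), and ``spending an unused target point'' fails precisely in the tight case $\binom n0+\cdots+\binom ne = 2^{n-k}$, which includes the most important examples such as $(3,1,1)$ and $(7,1,3)$. The fix that works uniformly, and the one the paper uses, is to post-compose with a transposition of two nonzero points \emph{inside} $W_k$---e.g.\ the standard basis vectors $e_{n-1}$ and $e_n$, which lie in $W_k$ because \eqref{eqn:space-bound} forces $k\leq n-2$---since this flips parity while preserving the property of mapping $B_e$ into $W_k$. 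Finally, your sketch of the generation result itself is not sufficient: $2$-transitivity plus evenness does not pin down the alternating group (and note a Toffoli on exactly $3$ wires is a single transposition, hence odd); the paper's proof additionally exhibits a $3$-cycle coming from the Toffoli subgroup and applies Jordan's theorem for primitive groups containing a $3$-cycle.
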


Note that this is false if we restrict to circuits containing only CNOTs (which act linearly on $\mathbb F_2^n$) or to circuits containing only Toffolis (which fix the set of states of weight at most $1$).
Any purification circuit for which \eqref{eqn:space-bound} is equality is optimal in the sense that, for $\prep < 1/2$, it maps the most likely $2^{n-k}$ basis states to the most useful $2^{n-k}$ states.

We prove Proposition~\ref{prop:existence} in Appendix~\ref{app:existence}.
The argument is group-theoretic, and does not provide an efficient procedure to construct purification circuits with given parameters.
For the special case of $(2^{m+1}-1, 1, 2^m-1)$ circuits we describe an explicit if impractical construction in Appendix~\ref{app:explicit}.

\subsection{Finding small circuits}
\label{sec:small-circuits}

\begin{figure}
\centering
\subcaptionbox{\label{fig:(5,1,2)a}}
{
\begin{tikzpicture}
  \begin{yquant}
    qubit {$q_{\idx}$} q[5];
    cnot q[3] | q[0], q[1];
    cnot q[4] | q[0], q[2];
    cnot q[1] | q[0];
    cnot q[2] | q[0];
    cnot q[0] | q[1], q[2];
    cnot q[0] | q[3], q[4];
    cnot q[3] | q[0], q[1];
    cnot q[4] | q[0], q[2];
    cnot q[0] | q[3], q[4];
  \end{yquant}
\end{tikzpicture}
}
\subcaptionbox{\label{fig:(5,1,2)b}}
{
\begin{tikzpicture}
  \begin{yquant}
    qubit {$q_{\idx}$} q[5];
    cnot q[3] | q[1], q[2];
    cnot q[1] | q[0];
    cnot q[2] | q[0];
    cnot q[0] | q[2], q[1];
    cnot q[4] | q[0];
    cnot q[0] | q[3], q[2];
    cnot q[2] | q[1], q[4];
    cnot q[3] | q[0], q[4];
    cnot q[0] | q[3], q[2];
  \end{yquant}
\end{tikzpicture}
}
\caption{Examples of the two classes of $(5,1,2)$ purification circuits of length $9$.}\label{fig:(5,1,2)}
\end{figure}
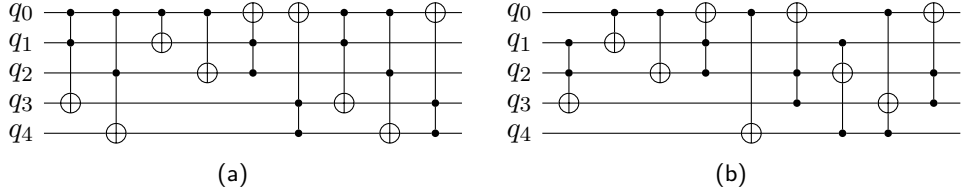

An $(n, 1, 1)$ purification circuit must have $n \geq 3$.
It is straightforward to check that no one or two gate circuit is a $(3, 1, 1)$ circuit, so the $(3,1,1)$ circuit in Figure~\ref{fig:(3,1,1)} is the smallest example that can defend against a single preparation error.
(There is one other family of examples, obtained by replacing the second CNOT gate by a Toffoli.)
Similarly, an $(n, 1, 2)$ circuit must have $n \geq 5$.
An exhaustive search reveals that the smallest $(5,1,2)$ circuits have $9$ gates; two examples are shown in Figure~\ref{fig:(5,1,2)}.
Equivalent circuits can be obtained by permuting the bottom four wires, or by interchanging consecutive commuting gates.
All 384 $(5,1,2)$ circuits of length $9$ which output on the top wire can be obtained from these examples in this way: 288 variations of the first type and 96 variations of the second.

There are $2^5 = 32$ basis states of 5 qubits, so a set of states (such as the low weight states, or the states that are $\ket 0$ on the first qubit) can be represented by a string of 32 bits.
By working forward from the initial set of states, backward from the target set of states and meeting in the middle, we can find $(5, 1, 2)$ circuits very quickly, in around $\sqrt{2^{32}} = 2^{16}$ time and space; see Appendix~\ref{app:searching} for more details.

To find a $(7, 1, 3)$ circuit in this way the comparable figure is $\sqrt{2^{2^7}} = 2^{64}$ time and space, so we are already at the limit of what can be achieved without special insight into the problem.
In the next section we present techniques for combining purification circuits.
The results do not have optimal parameters $(n, k, e)$, but do have relatively simple structures and avoid the requirement to do large amounts of work upfront to discover them.

\subsection{New purification circuits from old}\label{sec:new-from-old}

In this section we describe two techniques for constructing new purification circuits from existing ones.

\subsubsection{Composition}

\begin{figure}
\centering
\begin{tikzpicture}
  \begin{yquant}
    qubit {$q_{\idx}$} q[9];
    cnot q[7] | q[6];
    cnot q[4] | q[3];
    cnot q[1] | q[0];
    cnot q[8] | q[6];
    cnot q[5] | q[3];
    cnot q[2] | q[0];
    cnot q[6] | q[7], q[8];
    cnot q[3] | q[4], q[5];
    cnot q[0] | q[1], q[2];
    cnot q[3] | q[0];
    cnot q[6] | q[0];
    cnot q[0] | q[3], q[6];
  \end{yquant}
\end{tikzpicture}
\caption{Composing $(3,1,1)$ circuits to obtain a $(9, 1, 3)$ circuit.}
\label{fig:composition}
\end{figure}
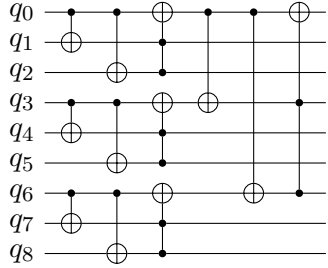

The simplest and most general technique is to compose circuits, of similar or different types, to obtain larger ones.
Figure~\ref{fig:composition} shows three $(3,1,1)$ circuits feeding their outputs into a fourth $(3,1,1)$ circuit.
An error on the output requires errors on at least two of the inputs of the fourth circuit, which requires at least two errors on at least two of the first three circuits, so the composed circuit has parameters $(9, 1, 3)$.
By Proposition~\ref{prop:existence}, a $(9,1,e)$ purification circuit could in principle have $e = 4$, so we have given up something on the achievable protection in exchange for a concrete circuit with a simple structure.

By the same argument we obtain the following.

\begin{proposition}\label{prop:composition}
Feeding the outputs of $n_2$ copies of an $(n_1, 1, e_1)$ purification circuit with $g_1$ gates into an $(n_2, 1, e_2)$ purification circuit with $g_2$ gates produces an $(n_1n_2, 1, (e_1+1)(e_2+1)-1)$ circuit with $n_2g_1 + g_2$ gates.\qed
\end{proposition}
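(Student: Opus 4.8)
The plan is to generalise, essentially verbatim, the counting argument already sketched for the $(9,1,3)$ circuit of Figure~\ref{fig:composition}: nothing new is needed beyond careful bookkeeping. Label the $n_2$ copies of the $(n_1,1,e_1)$ circuit by $j \in \{1,\ldots,n_2\}$, so that the $j$-th copy acts on a block $B_j$ of $n_1$ wires, the blocks being disjoint, with nominated output wire $w_j \in B_j$. The $(n_2,1,e_2)$ circuit is then applied to the $n_2$ wires $w_1,\ldots,w_{n_2}$, with some nominated output wire $w$ among them. The total number of wires is $\sum_j |B_j| = n_1 n_2$; a composition of CNOT and Toffoli circuits is again a CNOT and Toffoli circuit; and the gate count $n_2 g_1 + g_2$ is immediate from counting the gates layer by layer. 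So it only remains to establish the claimed tolerance $e := (e_1+1)(e_2+1)-1$, i.e.\ that every input bitstring of weight at most $e$ is taken to a bitstring that is $0$ on wire $w$.

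I would argue this by contraposition. Fix an input $x \in \mathbb{F}_2^{\,n_1 n_2}$ and suppose the composed circuit outputs $1$ on wire $w$. Let $y \in \mathbb{F}_2^{\,n_2}$ record the values carried by wires $w_1,\ldots,w_{n_2}$ after the first layer; these are exactly the input fed to the second circuit. Since the second circuit is an $(n_2,1,e_2)$ circuit, an input of weight at most $e_2$ would be sent to a string that is $0$ on $w$; hence $\mathrm{wt}(y) \ge e_2+1$. For each $j$ with $y_j = 1$, the $j$-th copy of the first circuit output $1$ on its nominated wire $w_j$, and since that circuit is an $(n_1,1,e_1)$ circuit this forces its input to have weight at least $e_1+1$, i.e.\ $\mathrm{wt}(x|_{B_j}) \ge e_1+1$. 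As the $B_j$ are disjoint and at least $e_2+1$ coordinates of $y$ equal $1$, summing gives $\mathrm{wt}(x) \ge (e_1+1)(e_2+1) = e+1$. Contrapositively, every input of weight at most $e$ is mapped to a string that vanishes on $w$, which is precisely the assertion that the composition is an $(n_1 n_2, 1, e)$ purification circuit.

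Two points deserve care but present no real obstacle: (i) one must line up the nominated output wires of the first layer with the wires on which the second circuit runs, so that the intermediate string $y$ is genuinely the object to which the $(n_2,1,e_2)$ guarantee applies; and (ii) one must use the thresholds $e_1,e_2$ in the right direction — the circuit may well fail when $\mathrm{wt}(x) = e+1$ (for instance when the $e+1$ errors split as $e_1+1$ on each of $e_2+1$ distinct blocks and those sub-circuits all misfire), which is exactly why the tolerance is $(e_1+1)(e_2+1)-1$ and no larger. Finally, the resulting parameter triple is automatically admissible in the sense of Proposition~\ref{prop:existence}: the composed circuit is a permutation of $\mathbb{F}_2^{\,n_1 n_2}$, hence injects the weight-$\le e$ strings into the $2^{n_1 n_2 - 1}$ strings vanishing on $w$, which forces the space bound \eqref{eqn:space-bound} for these parameters. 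I therefore expect the whole argument to be short, the only "hard" part being to state the composition construction precisely enough that the block decomposition above is unambiguous.
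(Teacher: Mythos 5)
Your proof is correct and is essentially the paper's own argument: the paper proves the $(9,1,3)$ special case by exactly this contrapositive counting (an output error forces at least $e_2+1$ erroneous intermediate wires, each forcing at least $e_1+1$ input errors in its disjoint block) and then asserts the general case ``by the same argument,'' which is what you have written out. Your extra bookkeeping about block disjointness and the direction of the tolerance guarantee is sound and adds nothing beyond what the paper intends.
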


Observe that feeding $(5, 1, 2)$ circuits into a $(3, 1, 1)$ circuit or feeding $(3, 1, 1)$ circuits into a $(5, 1, 2)$ both result in $(15, 1, 5)$ circuits, but $5$-into-$3$ uses 30 gates and $3$-into-$5$ uses 20 gates, so there is typically an incentive to use simpler circuits closer to the raw input.

\begin{figure}
\centering
  \subcaptionbox{$(5,1,1)$\label{fig:(5,1,1)}}
{
\begin{tikzpicture}
  \begin{yquant}
    qubit {$q_{\idx}$} q[5];
    cnot q[1] | q[0];
    cnot q[2] | q[0];
    cnot q[0] | q[1], q[2];
    cnot q[3] | q[0];
    cnot q[4] | q[0];
    cnot q[0] | q[3], q[4];
  \end{yquant}
\end{tikzpicture}
}
  \subcaptionbox{$(7,1,2)$\label{fig:(7,1,2)}}
{
\begin{tikzpicture}
  \begin{yquant}
    qubit {$q_{\idx}$} q[7];
    cnot q[4] | q[3];
    cnot q[1] | q[0];
    cnot q[5] | q[3];
    cnot q[2] | q[0];
    cnot q[3] | q[4], q[5];
    cnot q[0] | q[1], q[2];
    cnot q[3] | q[0];
    cnot q[6] | q[0];
    cnot q[0] | q[3], q[6];
  \end{yquant}
\end{tikzpicture}
}
\caption{Interpolating between $(3,1,1)$ and $(9,1,3)$.}
\label{fig:interpolation}
\end{figure}
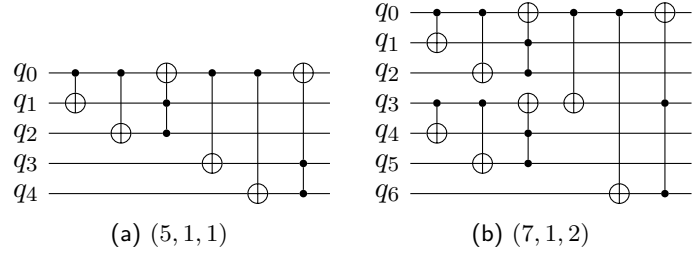

There is no requirement that the circuits in the first stage are identical.
For example, we might replace some of the first round $(3,1,1)$ circuits in the $(9,1,3)$ circuit by naive preparations.
Changing only some of the inputs in this way, as shown in Figure~\ref{fig:interpolation}, allows us to interpolate between the performance and resource requirements of the $(3,1,1)$ circuit and the $(9,1,3)$ circuit.

\subsubsection{Juxtaposition and overlapping}\label{sec:juxtaposition}

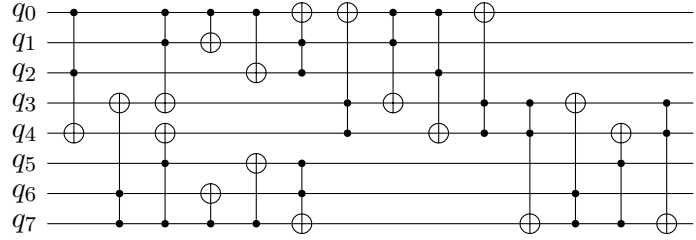
\begin{figure}
\centering
\begin{tikzpicture}
  \begin{yquant}
    qubit {$q_{\idx}$} q[8];
    cnot q[4] | q[0], q[2];
    cnot q[3] | q[7], q[6];
    cnot q[3] | q[0], q[1];
    cnot q[4] | q[7], q[5];
    cnot q[6] | q[7];
    cnot q[1] | q[0];
    cnot q[5] | q[7];
    cnot q[2] | q[0];
    cnot q[7] | q[6], q[5];
    cnot q[0] | q[1], q[2];
    cnot q[0] | q[3], q[4];
    cnot q[3] | q[0], q[1];
    cnot q[4] | q[0], q[2];
    cnot q[0] | q[3], q[4];
    cnot q[7] | q[3], q[4];
    cnot q[3] | q[7], q[6];
    cnot q[4] | q[7], q[5];
    cnot q[7] | q[3], q[4];
  \end{yquant}
\end{tikzpicture}
\caption{An $(8,2,2)$ circuit obtained by overlapping two $(5,1,2)$ circuits.  Output is on the first and last wires.}
\label{fig:(8,2,2)}
\end{figure}

So far all of our circuits have produced a single output.
The simplest way to get multiple outputs is to use multiple circuits; $m$ copies of an $(n, k, e)$ circuit can be viewed as a single $(mn, mk, e)$ circuit.
This is very far from optimal, as we are over-protected against sets of $e$ errors split between the $m$ circuits.
In favourable situations we can exploit this fact by overlapping circuits to re-use qubits.
For example, the circuit in Figure~\ref{fig:(8,2,2)} comprises two copies of the $(5,1,2)$ circuit from Figure~\ref{fig:(5,1,2)a} sharing qubits $q_3$ and $q_4$.
We can check that this circuit retains the full protection of both original copies, for an overall parameter set $(8,2,2)$.
This circuit has been drawn to highlight its symmetry, but its depth can be reduced by $1$ to $13$ by performing the first four gates over two rounds.

\subsection{Relation to classical codes}

\begin{figure}
\centering
\subcaptionbox{A legible ordering of the gates.\label{fig:hamming-legible}}
{
\begin{tikzpicture}
  \begin{yquant}
    qubit {$q_\idx$} q[7];
    cnot q[3] | q[0];
    cnot q[6] | q[0];
    cnot q[0] | q[3], q[6];
    cnot q[4] | q[1];
    cnot q[6] | q[1];
    cnot q[1] | q[4], q[6];
    cnot q[5] | q[2];
    cnot q[6] | q[2];
    cnot q[2] | q[5], q[6];
    cnot q[4] | q[3];
    cnot q[5] | q[3];
    cnot q[3] | q[4], q[5];
  \end{yquant}
\end{tikzpicture}
}
\subcaptionbox{A logically equivalent circuit that can be scheduled over 8 rounds.\label{fig:hamming-scheduled}}
{
\begin{tikzpicture}
  \begin{yquant}
    qubit {$q_{\idx}$} q[7];
    cnot q[5] | q[2];
    cnot q[4] | q[1];
    cnot q[3] | q[0];
    cnot q[4] | q[3];
    cnot q[6] | q[0];
    cnot q[5] | q[3];
    cnot q[6] | q[1];
    cnot q[6] | q[2];
    cnot q[0] | q[3], q[6];
    cnot q[1] | q[4], q[6];
    cnot q[2] | q[5], q[6];
    cnot q[3] | q[4], q[5];
  \end{yquant}
\end{tikzpicture}
}
\caption{$(7,4,1)$ purification circuits based on the Hamming code.}
\label{fig:hamming}
\end{figure}
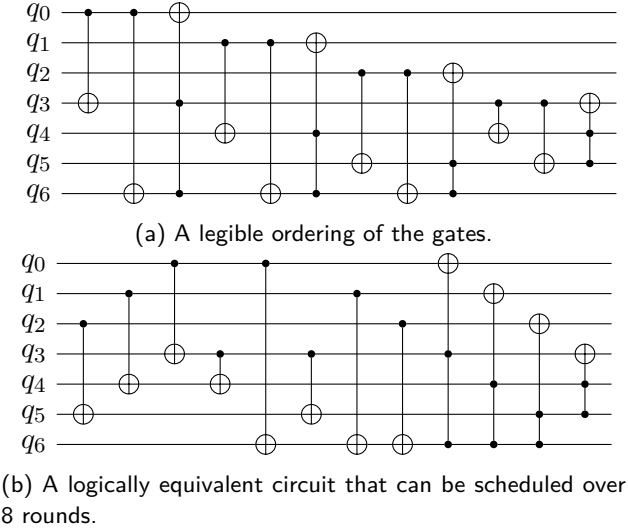

A $(2e+1, 1, e)$ purification circuit can be viewed as a decoder for the $(2e+1)$-bit classical repetition code.
The output bit contains the majority vote of the input bits, and the other bits contain the syndrome information required to reconstruct the input.
Any CNOT and Toffoli circuit for decoding a classical $[n,k,2e+1]$ code can be viewed as an $(n,k,e)$ purification circuit in the same way.

This is generally more than we need; we only require that strings close to the zero codeword are correctly decoded.
Take for example the $[7,4,3]$ Hamming code.
By Proposition~\ref{prop:existence}, there is a CNOT and Toffoli circuit bijecting the ball of radius 1 about 0 with the space of vectors that vanish in $4$ nominated positions.
Since there are only $\binom{128}8 \approx 2^{40}$ sets of 8 basis states on 7 qubits, searching for these circuits is feasible, unlike the situation for $(7,1,3)$ circuits described in Section~\ref{sec:small-circuits}.
There are $508\,022\,784$ such circuits.
By inspecting a small number of these and rearranging gates by hand we arrive at the circuit presented in Figure~\ref{fig:hamming-legible}.
We emphasise that this is not a full decoding circuit for the Hamming code.

We can interpret this circuit as four consecutive $(3,1,1)$ circuits on sets of wires $\{0,3,6\}$, $\{1,4,6\}$, $\{2,5,6\}$, $\{3,4,5\}$.
The first three circuits share the auxiliary qubit $q_6$.
The final circuit re-uses the remaining auxiliary qubits from the first three circuits.
Figure~\ref{fig:hamming-scheduled} shows a logically equivalent circuit obtained by commuting gates past each other greedily that can be scheduled over 8 rounds.
This is the form of the circuit that we use in simulations.

In the next section we present a very general construction inspired by circuits like that in Figure~\ref{fig:hamming-legible}.

\section{Purification circuits from graphs}
\label{sec:graphs}

With perfect gates, an $(n,k,e)$ purification circuit improves a preparation error rate $\prep$ to $O(\prep^{e+1})$.
In practice purification circuits will themselves be subject to error, meaning that the dominant term in the output error probability is likely to be due to gate failures, for example of the final Toffoli.
There is therefore limited advantage to increasing $e$.
In this section we focus on increasing the \defn{rate} $k/n$, presenting a general construction, based on graphs, which allows us to tune the rate against other parameters like circuit depth.

\subsection{Reinterpreting the $(3,1,1)$ circuit}

We can interpret the operation of the $(3,1,1)$ circuit as follows.
Think of the first wire as a data qubit, and the other two wires as auxiliary qubits.
When the single allowed error is on the data, the two CNOTs mark the auxiliary qubits.
This double mark is detected by the Toffoli, which clears the data error.
When the error is on an auxiliary qubit, it can't spread back through the CNOTs, and it isn't able to activate both controls of the Toffoli, so the data remains unchanged.
If instead we have $a$ auxiliary qubits then we have $\binom a 2$ pairs available for marking in this way, which can be used to protect up to $\binom a 2$ data qubits against a single preparation error.
This construction is naturally expressed in the language of graphs.

\subsection{The general construction}

Given a graph $G$ with $r$ vertices and $s$ edges, we can obtain an $(r+s, s, 1)$ purification circuit as follows.
Associate one data qubit $q_{uv}$ to each edge $uv$, and one auxiliary qubit $q_v$ to each vertex $v$.
\begin{enumerate}[leftmargin=8em]
    \item[(detect stage)] For each edge $uv$, apply CNOTs controlled on $q_{uv}$ and targeting $q_u$, $q_v$.
    \item[(correct stage)] For each edge $uv$, apply Toffolis controlled on $q_u$, $q_v$ and targeting $q_{uv}$.
\end{enumerate}
The gates within each stage commute, so can be performed in any order.
The circuit has $2s$ CNOTs and $s$ Toffolis, and can be scheduled over at most $3\chi_e(G)$ rounds, where $\chi_e(G)$ is the edge-chromatic number of $G$.
See Figure~\ref{fig:graph} for the result of applying this process to a path of length $5$.

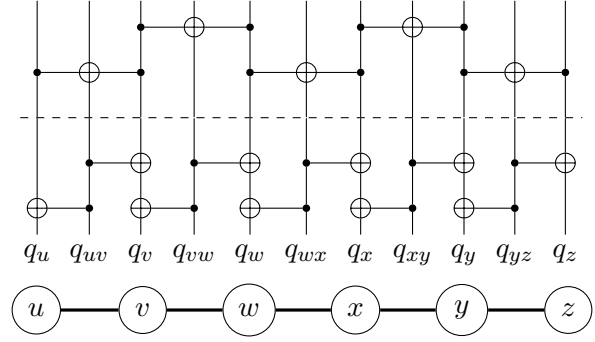
\begin{figure}
\centering
\begin{tikzpicture}[rotate=90]
  \begin{yquant}[register/separation=7pt]
    qubit {\rotatebox{-90}{$q_u$}} u;
    qubit {\rotatebox{-90}{$q_{uv}$}} uv;
    qubit {\rotatebox{-90}{$q_v$}} v;
    qubit {\rotatebox{-90}{$q_{vw}$}} vw;
    qubit {\rotatebox{-90}{$q_w$}} w;
    qubit {\rotatebox{-90}{$q_{wx}$}} wx;
    qubit {\rotatebox{-90}{$q_x$}} x;
    qubit {\rotatebox{-90}{$q_{xy}$}} xy;
    qubit {\rotatebox{-90}{$q_y$}} y;
    qubit {\rotatebox{-90}{$q_{yz}$}} yz;
    qubit {\rotatebox{-90}{$q_z$}} z;
    cnot u | uv;
    cnot v | vw;
    cnot w | wx;
    cnot x | xy;
    cnot y | yz;
    cnot v | uv;
    cnot w | vw;
    cnot x | wx;
    cnot y | xy;
    cnot z | yz;
    barrier (u,uv,v,vw,w,wx,x,xy,y,yz,z);
    cnot uv | u, v;
    cnot wx | w, x;
    cnot yz | y, z;
    cnot vw | v, w;
    cnot xy | x, y;
  \end{yquant}
  \tikzmath{\step = 1.4;\offset=-0.17;}
  \node[circle,draw=black] (u) at (-1,\offset) {$u$};
  \node[circle,draw=black] (v) at (-1,\offset-\step) {$v$};
  \node[circle,draw=black] (w) at (-1,\offset-\step*2) {$w$};
  \node[circle,draw=black] (x) at (-1,\offset-\step*3) {$x$};
  \node[circle,draw=black] (y) at (-1,\offset-\step*4) {$y$};
  \node[circle,draw=black] (z) at (-1,\offset-\step*5) {$z$};
  \draw[very thick] (u) -- (v) -- (w) -- (x) --(y) -- (z);
\end{tikzpicture}
\caption{In the graph construction, each edge becomes a $(3,1,1)$ circuit.  In the first stage, CNOTs copy an error on any edge to its endpoints.  In the second stage, Toffolis correct the error on any edge which marked its endpoints in this way.  The gates within each stage commute and can be applied in any order.}\label{fig:graph}
\end{figure}

\begin{figure}
    \centering
    \includegraphics[page=1,scale=0.5]{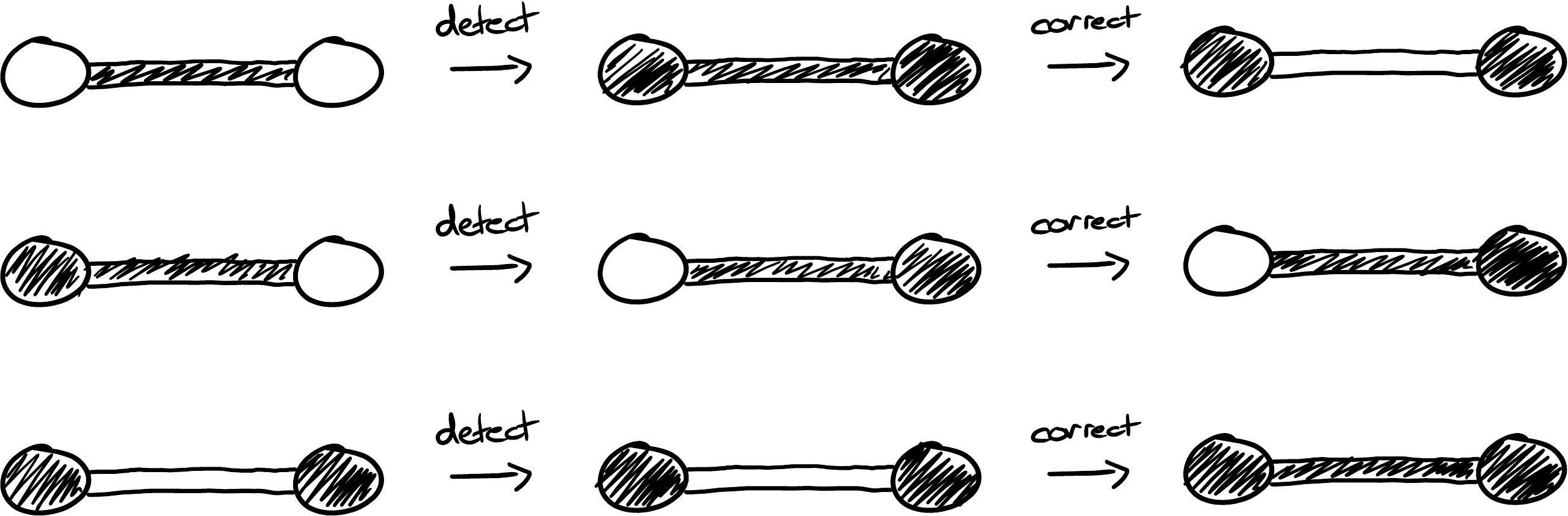}
    \caption{Correct operation of the $(3,1,1)$ circuit viewed as a graph and the two different ways in which it fails to handle two errors.}
    \label{fig:edge-graph}
\end{figure}

There is a superficial resemblance between these circuits and those implementing a surface code (or repetition code, in the case where $G$ is a path or cycle).
Data qubits are associated with each edge, and `syndrome extraction' consists of CNOTs implementing a boundary operator, revealing those vertices incident to an odd number of edges with data errors.
The two processes then diverge, as a true surface code makes corrections based on considering a global syndrome; we instead make local corrections to edges both of whose endpoints are marked in the syndrome.
This works well for isolated errors, but not for more complicated error configurations.

\begin{figure}
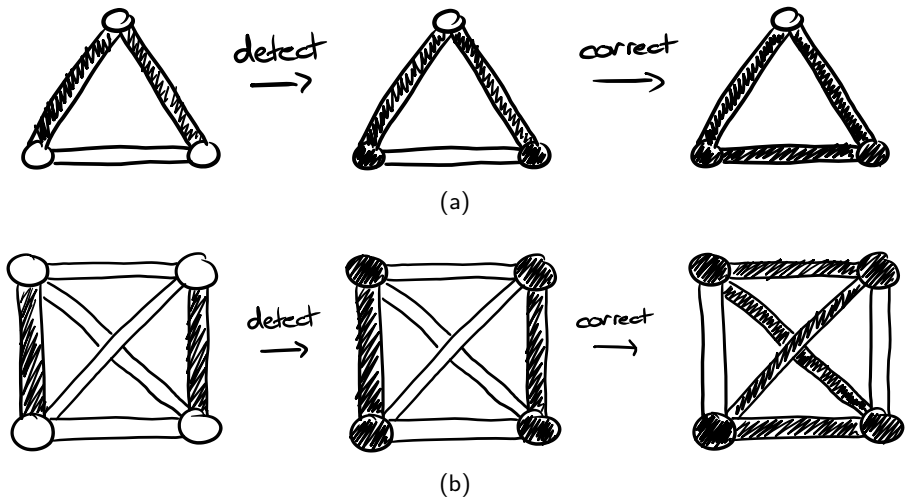

\centering
\subcaptionbox{\label{fig:k3}}{\includegraphics[page=3,scale=0.7]{operation}}\\[1em]
\subcaptionbox{\label{fig:k4}}{\includegraphics[page=4,scale=0.55]{operation}}
\caption{Failure of the complete graph construction with data errors on two different edges.}
\label{fig:2-failures}
\end{figure}

Figure~\ref{fig:edge-graph} shows the correct operation of the $(3,1,1)$ circuit when there is a single error on the data edge, and the two ways it can fail when there are two errors spread across the data edge and auxiliary vertices.
In larger graphs there are other modes of failure.
Errors on two incident edges cancel out part of their boundary, leading to neither error being cleared and inducing an error on any edge spanning their other two endpoints (Figure~\ref{fig:k3}).
Errors on two disjoint edges are successfully cleared by the circuit, but cause errors on any other edges induced by these vertices (Figure~\ref{fig:k4}).

Taking $G$ to be the complete graph $K_r$ produces an $(r+\binom r 2, \binom r 2, 1)$ purification circuit. 
If we write $k = \binom r 2$, then the parameters become $(k+O(\sqrt k), k, 1)$; that is, we can protect a set of $k$ qubits against a single preparation error with only $O(\sqrt k)$ overhead.  The optimal overhead is at least $\log_2 k$ (as we must be able to map every single qubit error to an error pattern supported on the auxiliary qubits), so this is not too far from best possible.

\subsection{Paths and cycles}

In addition to failing for many sets of two preparation errors (Figure~\ref{fig:2-failures}), the circuit for $K_r$ has the disadvantage that it requires $\Omega(r)$ rounds of gates.
These disadvantages can be addressed by choosing a sparser graph, such a long cycle.
If $G$ is a cycle $C_k$ on $k$ vertices, then the circuit can be scheduled over $4$ (if $k$ is even) or $5$ (if $k$ is odd) rounds, and has a natural planar layout (cf.\ Figure~\ref{fig:graph}).
It also significantly outperforms its parameters.

\begin{figure}
    \centering
  \subcaptionbox{Output on $q_3$.\label{fig:small-light-cone}}
{
\begin{tikzpicture}
  \begin{yquant}
    qubit {$q_{\idx}$} q[6];
    cnot q[4] | q[5];
    cnot q[0] | q[1];
    cnot q[2] | q[3];
    cnot q[2] | q[1];
    cnot q[4] | q[3];
    cnot q[3] | q[2], q[4];
  \end{yquant}
\end{tikzpicture}
}
  \subcaptionbox{Output on $q_5$.\label{fig:large-light-cone}}
{
\begin{tikzpicture}
  \begin{yquant}
    qubit {$q_{\idx}$} q[10];
    cnot q[8] | q[9];
    cnot q[4] | q[5];
    cnot q[0] | q[1];
    cnot q[6] | q[7];
    cnot q[2] | q[3];
    cnot q[6] | q[5];
    cnot q[2] | q[1];
    cnot q[8] | q[7];
    cnot q[4] | q[3];
    cnot q[7] | q[6], q[8];
    cnot q[3] | q[2], q[4];
    cnot q[5] | q[4], q[6];
  \end{yquant}
\end{tikzpicture}
}
    \caption{Small and large light cones over a long even cycle.  When viewed as parts of the same circuit over a path or cycle, there is a further idle step following the short light cone.}
    \label{fig:light-cone}
\end{figure}
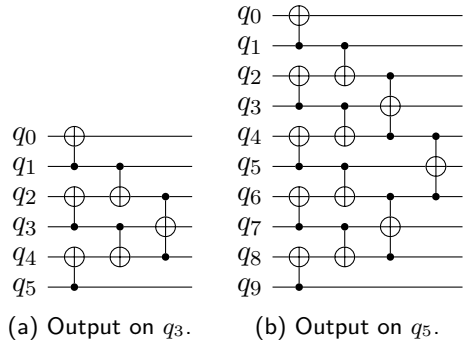

A $(2k,k,1)$ circuit defined over a cycle protects completely against any single error.
Two or more errors on the input might lead to an error on the output, but only if those errors are sufficiently close.
By the \defn{light cone} of an output qubit we mean the subset of the gates and input qubits that can causally affect it.
The state of the output qubit depends only on the preparation and gate errors on this part of the circuit.
Over a long even cycle there are only two types of output qubit up to isomorphism (those in the first round of Toffolis and those in the second round) and so two possible light cones, shown in Figure~\ref{fig:light-cone}.
This allows us to analyse the output quality of the circuit over any sufficiently long even cycle by examining only two light cones.
Since the larger light cone only reaches 10 qubits, `sufficiently long' means $k \geq 10$.

With preparation error rate $\prep$ and perfect gates, each output qubit experiences errors with probability $O(\prep^2)$.
Errors on output qubits are independent if they are sufficiently separated that their light cones are disjoint, but errors on nearby output qubits are correlated.
This might lead to the following situation.
Suppose that some set of $a$ errors on the input leads to $b > a$ errors on the output.
Then an event which should have probability $O(\prep^b)$ in fact occurs with probability $\Omega(\prep^a)$.
Depending on the intended use of the output qubits, this might be problematic.

\begin{figure}
    \centering
    \includegraphics[page=2,scale=0.7]{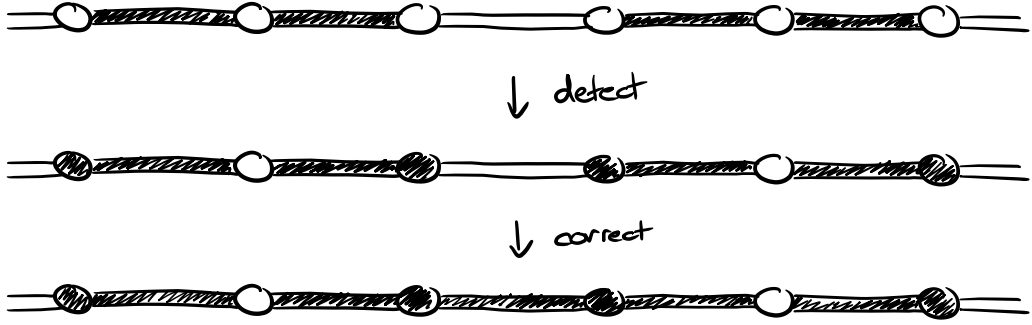}
    \caption{Smallest non-fault tolerant error pattern on a path or long cycle.}
    \label{fig:path-graph}
\end{figure}

We call a purification circuit \defn{combinatorially fault-tolerant up to $b$ preparation errors} if, for any $a \leq b$, any set of $a$ errors on the input leads to at most $a$ errors on the output.
For example, one can check that the circuits corresponding to long cycles are combinatorially fault-tolerant up to $3$ preparation errors and that there is exactly one pattern of $4$ errors on the input leading to more than $4$ errors on the output (Figure~\ref{fig:path-graph}).
This pattern works as follows.
Input errors on two adjacent edges are frozen in place as their shared auxiliary qubit is first set to $\ket 1$ then reset to $\ket 0$, so does not activate the Toffolis.
Two of these frozen patterns placed one edge apart preserve the original errors but also causes a new error on the separating edge.
            
\begin{figure}
\centering
\begin{tikzpicture}[rotate=90]
  \begin{yquant}[register/separation=7pt]
    qubit {\rotatebox{-90}{$q_u$}} u;
    qubit {\rotatebox{-90}{$q_{uv}$}} uv;
    qubit {\rotatebox{-90}{$q_v$}} v;
    qubit {\rotatebox{-90}{$q_{vw}$}} vw;
    qubit {\rotatebox{-90}{$q_w$}} w;
    qubit {\rotatebox{-90}{$q_{wx}$}} wx;
    qubit {\rotatebox{-90}{$q_x$}} x;
    qubit {\rotatebox{-90}{$q_{xy}$}} xy;
    qubit {\rotatebox{-90}{$q_y$}} y;
    qubit {\rotatebox{-90}{$q_{yz}$}} yz;
    qubit {\rotatebox{-90}{$q_z$}} z;
    cnot u | uv;
    cnot v | vw;
    cnot w | wx;
    cnot x | xy;
    cnot y | yz;
    cnot v | uv;
    cnot w | vw;
    cnot x | wx;
    cnot y | xy;
    cnot z | yz;
    cnot v | uv, vw;
    cnot x | wx, xy;
    cnot w | vw, wx;
    cnot y | xy, yz;
    barrier (u,uv,v,vw,w,wx,x,xy,y,yz,z);
    cnot uv | u, v;
    cnot wx | w, x;
    cnot yz | y, z;
    cnot vw | v, w;
    cnot xy | x, y;
  \end{yquant}
  \tikzmath{\step = 1.4;\offset=-0.17;}
  \node[circle,draw=black] (u) at (-1,\offset) {$u$};
  \node[circle,draw=black] (v) at (-1,\offset-\step) {$v$};
  \node[circle,draw=black] (w) at (-1,\offset-\step*2) {$w$};
  \node[circle,draw=black] (x) at (-1,\offset-\step*3) {$x$};
  \node[circle,draw=black] (y) at (-1,\offset-\step*4) {$y$};
  \node[circle,draw=black] (z) at (-1,\offset-\step*5) {$z$};
  \draw[very thick] (u) -- (v) -- (w) -- (x) --(y) -- (z);
\end{tikzpicture}
\caption{Graph circuits arising from paths or cycles can be made fully fault-tolerant by adding additional Toffolis to ensure that the shared vertex is set to $\ket 1$ when there are errors on consecutive edges.}\label{fig:tolerant-graph}
\end{figure}
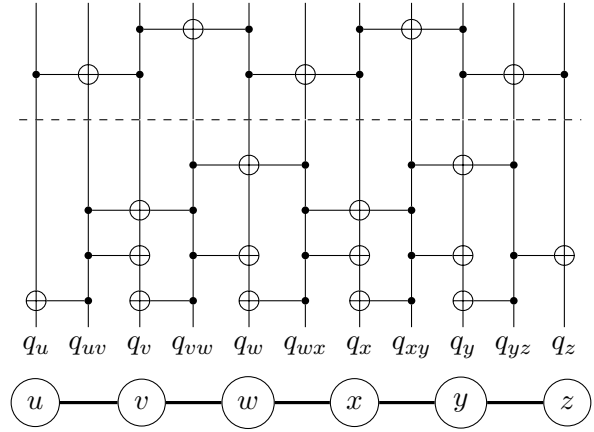
\begin{figure}
  \centering
  \subcaptionbox{Output on $q_5$.\label{fig:small-tolerant-light-cone}}
{
\begin{tikzpicture}
  \begin{yquant}
    qubit {$q_{\idx}$} q[10];
    cnot q[6] | q[7];
    cnot q[2] | q[3];
    cnot q[8] | q[9];
    cnot q[4] | q[5];
    cnot q[0] | q[1];
    cnot q[8] | q[7];
    cnot q[4] | q[3];
    cnot q[6] | q[5];
    cnot q[2] | q[1];
    cnot q[6] | q[5], q[7];
    cnot q[2] | q[1], q[3];
    cnot q[4] | q[3], q[5];
    cnot q[5] | q[4], q[6];
  \end{yquant}
\end{tikzpicture}
}
  \subcaptionbox{Output on $q_7$.\label{fig:large-tolerant-light-cone}}
{
\begin{tikzpicture}
  \begin{yquant}
    qubit {$q_{\idx}$} q[14];
    cnot q[10] | q[11];
    cnot q[6] | q[7];
    cnot q[2] | q[3];
    cnot q[12] | q[13];
    cnot q[8] | q[9];
    cnot q[4] | q[5];
    cnot q[0] | q[1];
    cnot q[12] | q[11];
    cnot q[8] | q[7];
    cnot q[4] | q[3];
    cnot q[10] | q[9];
    cnot q[6] | q[5];
    cnot q[2] | q[1];
    cnot q[10] | q[9], q[11];
    cnot q[6] | q[5], q[7];
    cnot q[2] | q[1], q[3];
    cnot q[8] | q[7], q[9];
    cnot q[4] | q[3], q[5];
    cnot q[9] | q[8], q[10];
    cnot q[5] | q[4], q[6];
    cnot q[7] | q[6], q[8];
  \end{yquant}
\end{tikzpicture}
}
    \caption{Small and large light cones for the fault-tolerant circuit over a long even cycle.  When viewed as parts of the same circuit over a path or cycle, there is a further idle step following the short light cone.}
    \label{fig:tolerant-light-cone}
\end{figure}
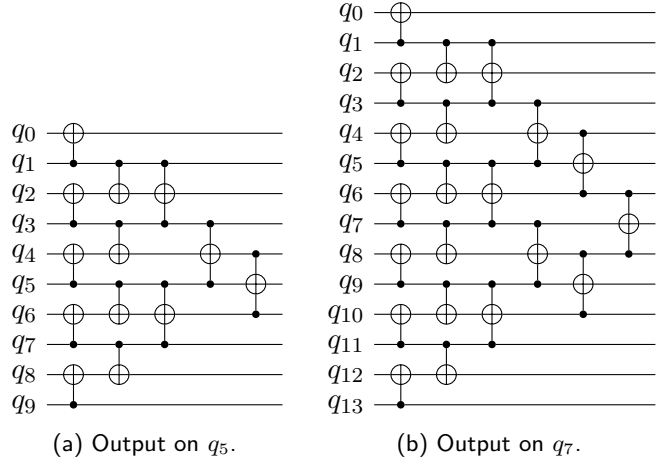

One way to stop this behaviour is to prevent errors on adjacent pairs of edges from being frozen in place.
We can do this by adding an extra round of Toffolis to the detect stage.
\begin{enumerate}[leftmargin=8em]
    \item[(detect stage)] For each edge $uv$, apply CNOTs controlled on $q_{uv}$ and targeting $q_u$, $q_v$.
    \item[(detect$'$ stage)] For each pair of consecutive edges $uv$, $vw$, apply Toffolis controlled on $q_{uv}$, $q_{vw}$ and targeting $q_v$.
    \item[(correct stage)] For each edge $uv$, apply Toffolis controlled on $q_u$, $q_v$ and targeting $q_{uv}$.
\end{enumerate}
The gates in both detect stages all commute with each other, so can be performed in any order.
One choice of circuit is shown in Figure~\ref{fig:tolerant-graph}, with corresponding light cones in Figure~\ref{fig:tolerant-light-cone}.

The idea behind this extended circuit is that, if there are no input errors on the vertices, then the detect stage marks \emph{all} of the endpoints of each edge with an input error, with no cancellation arising from errors on consecutive edges.
Then an input error on an edge is successfully cleared unless there is also an input error on one of its endpoints;
and an edge with no input error experiences an error on the output only if, for both of its endpoints, there is an input error on either that vertex or the next edge.

\begin{theorem}\label{thm:tolerant}
The extended purification circuit defined over a path or cycle is combinatorially fault-tolerant for any number of preparation errors.
\end{theorem}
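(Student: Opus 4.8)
The plan is to trace the circuit classically once and for all, read off exactly which input error patterns produce output errors, and then prove the resulting purely combinatorial inequality by a charging argument. List the edges of the path or cycle in order as $e_1, e_2, \dots$ with $e_i = v_{i-1}v_i$, and encode an input error pattern by bits $x_i$ (data qubit $q_{e_i}$ prepared wrongly) and $y_j$ (auxiliary qubit $q_{v_j}$ prepared wrongly), treating edges absent at the ends of a path as $0$; the number of input errors is then $\sum_i x_i + \sum_j y_j$. Each $v_j$ has degree at most $2$, so tracing the two detect stages shows that afterwards $q_{v_j}$ holds $M_j = y_j \oplus x_j \oplus x_{j+1} \oplus (x_j\wedge x_{j+1}) = y_j \oplus (x_j \vee x_{j+1})$: the two detect CNOTs contribute $x_j\oplus x_{j+1}$ and the single detect$'$ Toffoli supplies the missing $x_jx_{j+1}$. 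The correct stage then leaves $q_{e_i}$ holding the output bit $z_i = x_i \oplus (M_{i-1}\wedge M_i)$. Specialising, when $x_i=1$ one gets $M_{i-1} = 1\oplus y_{i-1}$, $M_i = 1\oplus y_i$, hence $z_i = y_{i-1}\vee y_i$; and when $x_i = 0$ one gets $z_i = (y_{i-1}\oplus x_{i-1})\wedge(y_i\oplus x_{i+1})$. So the output errors are exactly the edges $e_i$ with $x_i=1$ having a vertex error at an endpoint, together with the edges $e_i$ with $x_i=0$ for which both marks $M_{i-1},M_i$ survive; the goal is to show there are at most $\sum_i x_i + \sum_j y_j$ of them.

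Write $E = \{e_i : x_i = 1\}$ and $V = \{v_j : y_j = 1\}$, and split $E = E_1 \sqcup E_0$, where $E_1$ consists of the edges with a vertex error at an endpoint. Each output error on an $E_1$-edge is charged to that edge itself, which uses up $E_1$ injectively; it then suffices to charge the remaining output errors — the set $B$ of ``bad'' edges $e_i$ with $x_i=0$ — injectively into $E_0 \sqcup V$, that is, to prove $|B| \leq |E_0| + |V|$. For $e_i \in B$ each surviving mark has a definite witness: $M_{i-1}=1$ either because $y_{i-1}=1$, $x_{i-1}=0$ (a vertex error at $v_{i-1}$) or because $x_{i-1}=1$, $y_{i-1}=0$ (an edge error at $e_{i-1}$), and symmetrically on the right with $v_i$, $e_{i+1}$. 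This assigns to $e_i$ a left token and a right token; when a token would be an edge $e_{i-1}$ that in fact lies in $E_1$ rather than $E_0$ — which happens precisely when $y_{i-2}=1$ — replace it by the vertex $v_{i-2}\in V$, and symmetrically replace $e_{i+1}$ by $v_{i+1}$ when $y_{i+1}=1$. Then both tokens of every $e_i \in B$ lie in $E_0\sqcup V$, and since the left token lies in $\{v_{i-1},v_{i-2},e_{i-1}\}$ and the right token in $\{v_i,v_{i+1},e_{i+1}\}$ they are always distinct.

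To bound $|B|$, form the multigraph $H$ with vertex set $E_0 \sqcup V$ and one edge per element of $B$ joining its two tokens. The crucial claim is that $H$ has maximum degree $2$. An $E_0$-token $e_k$ can only be a token of $e_{k-1}$ (as its right token) or of $e_{k+1}$ (as its left token). A $V$-token $v_j$ can only be a token of $e_{j-1}$, $e_j$, $e_{j+1}$ or $e_{j+2}$; but being a token of $e_{j-1}$ forces $x_j=1$ whereas $e_j,e_{j+1}\in B$ force $x_j=0$, and being a token of $e_{j+2}$ forces $x_{j+1}=1$ whereas $e_{j+1}\in B$ forces $x_{j+1}=0$, so at most two of the four actually occur (a short case check shows the only compatible pairs are $\{e_j,e_{j+1}\}$ and $\{e_{j-1},e_{j+2}\}$). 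Hence every component of $H$ is a path or a cycle, so $|B| = |E(H)| \leq |V(H)| \leq |E_0|+|V|$. Combining with the $E_1$-charge, the number of output errors is $|E_1| + |B| \leq |E_1| + |E_0| + |V| = |E| + |V|$, the number of input errors. For a cycle the argument is identical with all indices read modulo the length (and the degree-$2$ claim is unchanged).

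I expect the main obstacle to be the bookkeeping in the last two steps: pinning down the witness/token assignment exactly — in particular the case in which both surviving marks of a $B$-edge trace back to edge errors, so that the tokens must ``reach past'' a possibly present vertex error into $V$ — and verifying the incompatibilities among the $x$-values that keep the maximum degree of $H$ equal to $2$. Everything before that is a routine trace of a small, highly structured circuit, and the conversion of ``degree at most $2$'' into the inequality $|E(H)|\le|V(H)|$ is standard.
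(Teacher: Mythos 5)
Your proof is correct and reaches the same inequality as the paper, but by a genuinely different route. The paper never computes the circuit's action explicitly: it works qualitatively (``for an edge to acquire an output error its shared vertex must be marked after the detect$'$ stage''; ``an input error on an edge cannot persist unless an endpoint has an input error''), reduces paths to cycles by nominating one permanently clean edge, and then gives a very short injection argument --- fix an orientation and show that walking clockwise from any element of $F\setminus E$ you meet an element of $V\cup(E\setminus F)$ before the next element of $F\setminus E$. You instead derive the exact Boolean function of the circuit ($M_j = y_j\oplus(x_j\vee x_{j+1})$, $z_i = x_i\oplus(M_{i-1}\wedge M_i)$, hence $z_i=y_{i-1}\vee y_i$ when $x_i=1$), which gives a complete characterisation of the output error set, and then run a two-token charging argument whose correctness rests on a degree-$2$ bound for the resulting multigraph $H$. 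I checked your witness assignment, the $E_1$-membership test ($e_{i-1}\in E_1$ iff $y_{i-2}=1$ given $y_{i-1}=0$), and the compatibility analysis showing each $V$-token has degree at most $2$; all of these are right. The paper's proof is shorter and needs no explicit formulas; yours is heavier on bookkeeping but yields more, namely the exact set of output errors as a function of the input pattern. One small blemish: your claim that the left and right tokens of a $B$-edge are ``always distinct'' fails on a $3$-cycle, where $v_{i-2}=v_{i+1}$; this is harmless, since a loop still contributes $2$ to the degree and the bound $|E(H)|\le|V(H)|$ follows from $2|E(H)|=\sum_w\deg(w)\le 2|V(H)|$ regardless, but you should either note this or restrict the distinctness claim to cycles of length at least $4$.
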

\noindent
We give the elementary proof in Appendix~\ref{app:tolerant}.

We have now described all of our constructions.
In the next section we present data on their performance.

\section{Performance of purification circuits}
\label{sec:performance}

\begin{table}[h]
\centering
\footnotesize
\begin{tabular}{p{9em}ccccc}\toprule
\centering name                                    & leading order error probability                             & $n/k$ & gates$/k$ & depth & figure                           \\\midrule
\centering bare                                    & $\prep$                                                              & 1   & 0  & 0  &                                     \\
\centering $(3,1,1)$                               & $3\prep^2+4\prep(\idle/2)+(\idle/2)^2+3(\cnot/4)+4(\toffoli/8)$      & 3   & 3  & 3  & \ref{fig:(3,1,1)}                   \\
\centering $(5,1,1)$                               & $\prep^2+8\prep(\idle/2)+16(\idle/2)^2+3(\cnot/4)+4(\toffoli/8)$     & 5   & 6  & 6  & \ref{fig:(5,1,1)}                   \\
\centering $(7,1,2)$                               & $6\prep^3+\prep(\idle/2)+4(\idle/2)^2+3(\cnot/4)+4(\toffoli/8)$      & 7   & 9  & 6  & \ref{fig:(7,1,2)}                   \\
\centering $(9,1,3)$                               & $27\prep^4+12\prep^2(\idle/2)+(\idle/2)^2+3(\cnot/4)+4(\toffoli/8)$  & 9   & 12 & 6  & \ref{fig:composition}               \\\midrule
\centering $(5,1,2)$a                              & $10\prep^3+5\prep(\idle/2)+9(\idle/2)^2+3(\cnot/4)+21(\toffoli/8)$   & 5   & 9  & 9  & \ref{fig:(5,1,2)a}                  \\
\centering $(5,1,2)$b                              & $10\prep^3+(\idle/2)+3(\cnot/4)+15(\toffoli/8)$                      & 5   & 9  & 9  & \ref{fig:(5,1,2)b}                  \\
\centering $(8,2,2)$                               & $(29\prep^3+8(\idle/2)+6(\cnot/4)+42(\toffoli/8))/2$                 & 4   & 9  & 13 & \ref{fig:(8,2,2)}                   \\\midrule
\centering $(7,4,1)$                               & $(2\prep+18(\idle/2)+14(\cnot/4)+20(\toffoli/8))/4$                  & 7/4 & 3  & 8  & \ref{fig:hamming-scheduled}         \\
\centering cycle (small light cone)                & $8\prep^2+(\idle/2)+3(\cnot/4)+4(\toffoli/8)$                        & 2   & 3  & 4  & \ref{fig:small-light-cone}          \\
\centering cycle (large light cone)                & $8\prep^2+(\idle/2)+3(\cnot/4)+4(\toffoli/8)$                        & 2   & 3  & 4  & \ref{fig:large-light-cone}          \\
\centering fault-tolerant cycle (small light cone) & $6\prep^2+(\idle/2)+3(\cnot/4)+12(\toffoli/8)$                       & 2   & 4  & 6  & \ref{fig:small-tolerant-light-cone} \\
\centering fault-tolerant cycle (large light cone) & $6\prep^2+(\idle/2)+3(\cnot/4)+12(\toffoli/8)$                       & 2   & 4  & 6  & \ref{fig:large-tolerant-light-cone} \\\bottomrule
\end{tabular}
\caption{Leading order error probability on each output qubit.  When $k>1$, this is interpreted as the expected number of errors on the output divided by $k$.}
\label{table:first-order}
\end{table}

For each circuit or light cone discussed in Sections~\ref{sec:purification} and~\ref{sec:graphs} we computed the distribution of the output qubits.
The $2^k$ probabilities in this output distribution are polynomials in $\prep, \idle, \cnot, \toffoli$.
Rather than perform Monte Carlo experiments we calculated enough terms of these polynomials to obtain the desired accuracy across the parameter range of interest; we give more details in Appendix~\ref{app:assessing}.

The circuits we simulate are listed in Table~\ref{table:first-order} along with some basic parameters.
To simulate a circuit with idle errors we need to decide how to schedule the gates into rounds.  
For the cycle-based constructions the gates were scheduled manually as described in Figures~\ref{fig:light-cone} and~\ref{fig:tolerant-light-cone}.
For the other constructions the gates were scheduled greedily, commuting gates past each other when necessary.
The circuit diagrams in this paper were programmatically generated from the scheduled circuits as simulated, and typeset using yquant~\cite{yquant}.
The typeset circuits respect the scheduled gate order, but to avoid visual clutter we have not forced them to respect the scheduling of gates into rounds.
The exact sequence of idle, CNOT and Toffoli gates simulated can be generated from the published source code~\cite{sourcecode}.
For example, the $(3,1,1)$ circuit simulated is
\begin{verbatim}
Circuit {outputBits = [0],
         gatesOf = [CNOT 0 1,Idle 2,CNOT 0 2,Idle 1,Toffoli 1 2 0]}.
\end{verbatim}

For a circuit or light cone we write $\pout(\prep,\idle,\cnot,\toffoli)$ for the expected number of errors on the output, divided by the number of outputs.
When there is a single output this is the probability of an error on that qubit;
when there are multiple outputs this is the average probability of an error on each qubit.

Let $g_1, g_2, g_3$ be the number of idle, CNOT and Toffoli gates in a circuit.
The polynomial $\pout(\prep,\idle,\cnot,\toffoli)$ is a sum of terms of the form 
\[
\prep^{f_0}(1-\prep)^{n-f_0}(\idle/2)^{f_1}(1-\idle)^{g_1-f_1}(\cnot/4)^{f_2}(1-\cnot)^{g_2-f_2}(\toffoli/8)^{f_3}(1-\toffoli)^{g_3-f_3},
\]
each corresponding to a run of the circuit in which there were $f_0$ preparation errors, $f_1$ idle errors, $f_2$ CNOT errors and $f_3$ Toffoli errors.
In particular,
\begin{equation}\label{eqn:perfect}
\pout(\prep,0,0,0) = \sum_{i=1}^n a_i \prep^i(1-\prep)^{n-i}
\end{equation}
for some $a_i \geq 0$, where the $i=0$ term is $0$ as there are no errors on the output if there are no errors on the input.
We plot these polynomials for each circuit in Figure~\ref{fig:perfect}.
Note that the small and large light cones of each cycle construction, and the two $(5,1,2)$ circuits, are equivalent in the absence of idle and gate errors, so only one representative of each type is plotted in Figure~\ref{fig:perfect}.

As expected, near $0$ the performance of the circuits is determined by the leading order term of their error probability.
We include the plot for the full range of preparation errors to point out some general features.
\begin{itemize}
\item 
All curves pass through $(0,0)$, as the zero state is fixed by CNOT and Toffoli gates.
\item
All curves pass through $(0.5,0.5)$, as a uniform mixture of the basis states after preparation is preserved by every gate.
For the majority vote-based $(2e+1,1,e)$ circuits, the curves are symmetric about this point.
\item 
At $\prep=1$ the output is deterministic.
For most circuits, such as the majority vote $(2e+1,1,e)$ circuits, the output is $\ket 1$, an error. 
For the cycle construction, an all $\ket 1$ input happens to produce an all $\ket 0$ output.
(For $\prep < 0.5$ the error rate at $1-\prep$ is greater than the error rate at $\prep$, so this isn't an argument in favour of preparing $\ket 1$ states deliberately.)
The $(7,4,1)$ circuit passes through $(1,0.75)$, meaning that an all $\ket 1$ input produces one $\ket 0$ and three $\ket 1$ states on the output.
\end{itemize}

\begin{figure}
\centering
\includegraphics{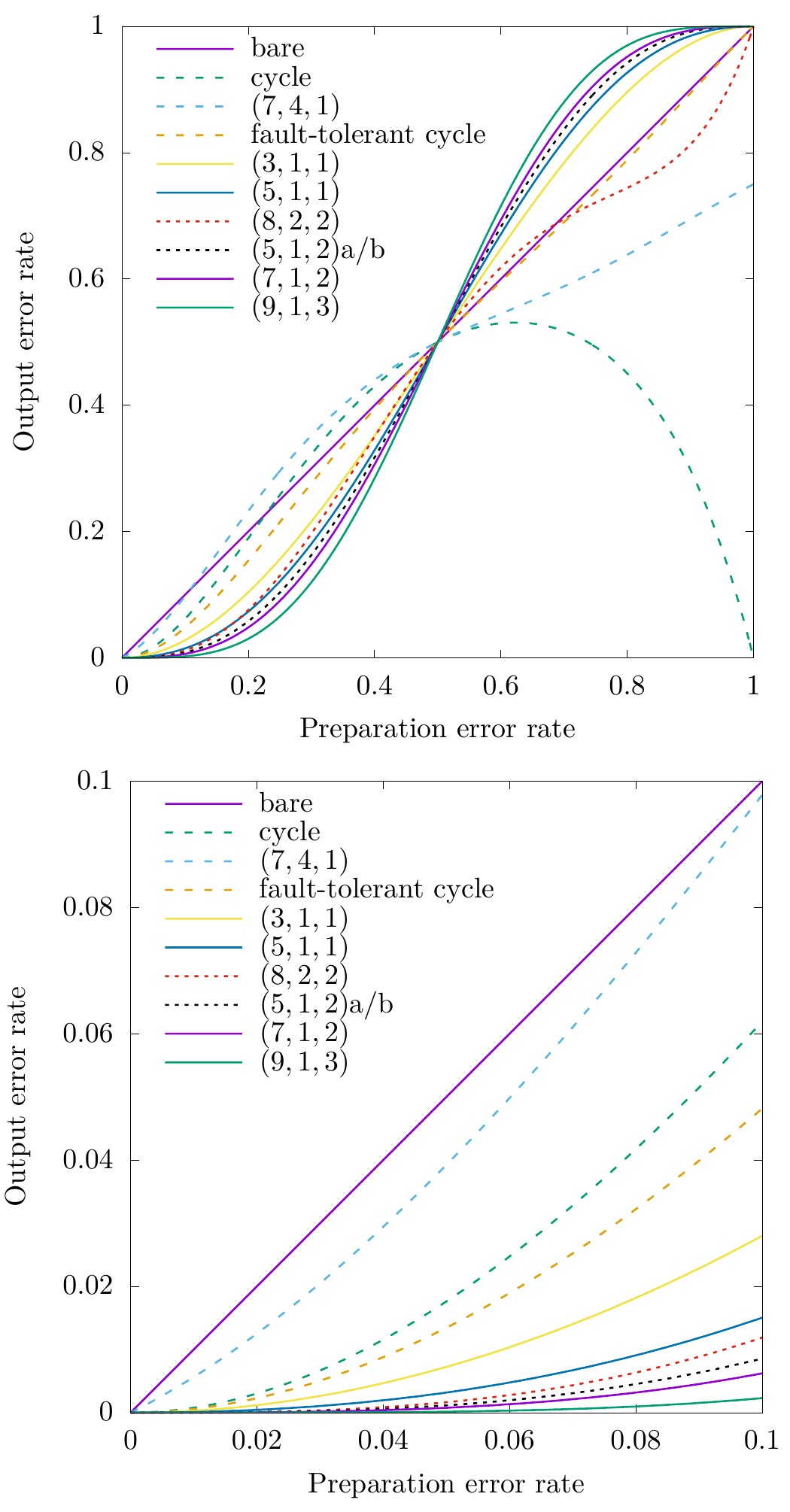}
\caption{Circuit performance with perfect gates.  When $k>1$, output error rate is interpreted as expected number of errors divided by $k$.}
\label{fig:perfect}
\end{figure}

\subsection{Noisy gates}
\label{sec:noisy-gates}

\begin{figure}
\centering
\includegraphics{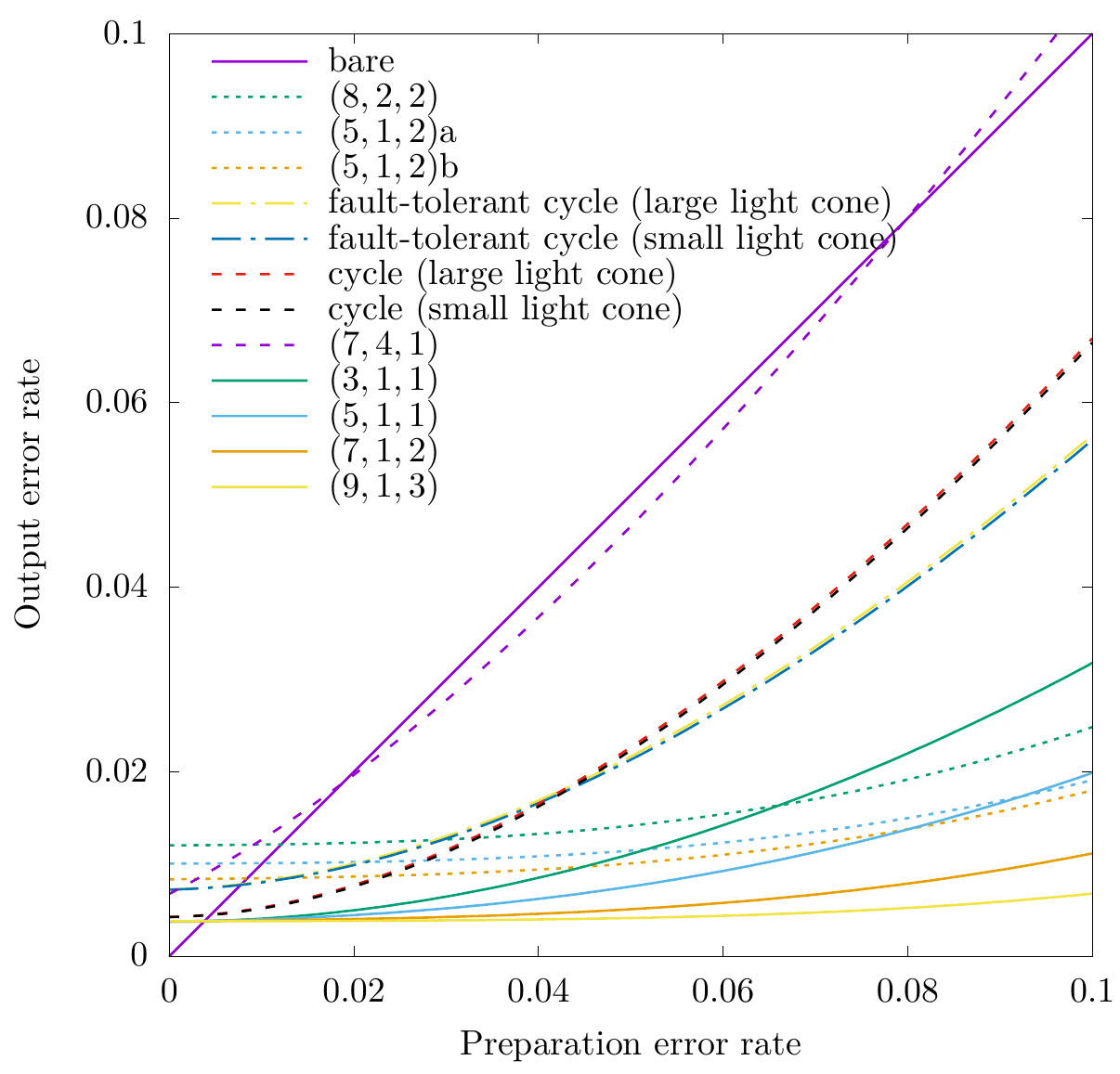}
\caption{Circuit performance with noisy gates ($\idle = 0.001, \cnot=\toffoli=0.003$).  When $k>1$, output error rate is interpreted as expected number of errors divided by $k$.}
\label{fig:noisy}
\end{figure}

Throughout this section we fix an idle error rate $\idle=0.001$.

Figure~\ref{fig:noisy} shows the performance of the various circuits with representative gate noise parameters $\cnot = \toffoli = 0.003$.
The main difference from Figure~\ref{fig:perfect} is that there can now be errors on the output even without errors on the input.
This limits the performance of circuits, moving the $y$-intercepts in the plot away from $0$.
For $\prep$ close to $0$ we have $\pout > \prep$, as high quality qubits are damaged by noisy gates.
Then for each circuit there is an interval of $\prep$ for which the circuit reduces the output error rate below $\prep$.
This interval might extend as far as $\prep = 0.5$ (at which no purification circuit can lead to an improvement) or, as in the case of the $(7,4,1)$ circuit in Figure~\ref{fig:noisy}, end much earlier.

These intervals are sections through the region of parameter space on which using each circuit leads to an improvement in average qubit quality.
Figure~\ref{fig:region} illustrates this region for the $(7,4,1)$ circuit.
Figure~\ref{fig:(3,1,1)region} illustrates this region for the $(3,1,1)$ circuit.
Note that these regions are genuinely different in form.
For example, immediately below the plane $\prep = 0.5$ there is no improvement from running the $(7,4,1)$ circuit, but there is an improvement from running the $(3,1,1)$ circuit.

\begin{figure}
\centering
\subcaptionbox{
The region of parameter space ($0\leq \prep \leq 0.5$, $\idle=0.001$, $0\leq \cnot \leq 0.05$, $1 \leq \toffoli/\cnot \leq 3$) in which using the $(7,4,1)$ circuit leads to an improvement in average qubit quality.
Each data point represents a set of parameters for which $\pout = \prep$.
The plane at $\prep=0.5$ corresponds to the maximum entropy state in which every computational basis state is equally likely.
The curved surface bounds the region to its left in which running the circuit provides an advantage over naive preparation.
\label{fig:(7,4,1)region}}
{
\includegraphics{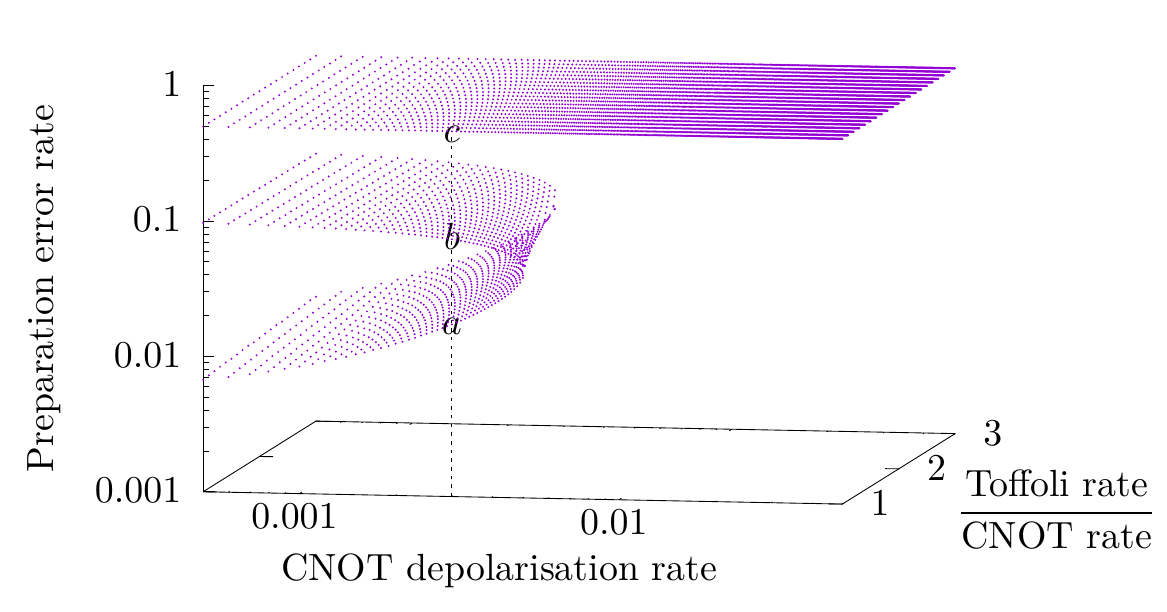}
}
\subcaptionbox{Each vertical line through Figure~\ref{fig:(7,4,1)region} intersects the surface exactly where the corresponding curve $\pout(\prep) = \pout(\prep,0.001,\cnot,\toffoli)$ crosses the line $y=x$.
The marked points $a \approx (0.019,0.019)$, $b\approx(0.080,0.080)$ and $c=(0.5,0.5)$ correspond to the identically marked points in Figure~\ref{fig:(7,4,1)region}.
\label{fig:(7,4,1)line}
}
{
\includegraphics{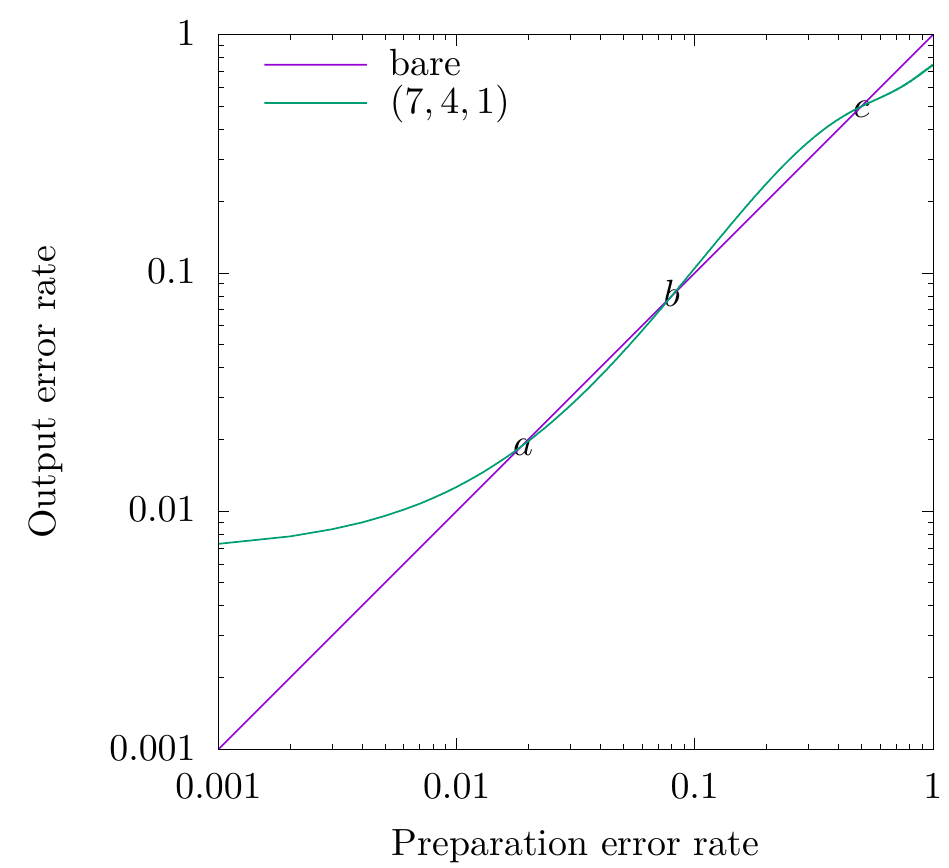}
}
\caption{}\label{fig:region}
\end{figure}

\begin{figure}
\centering
\includegraphics{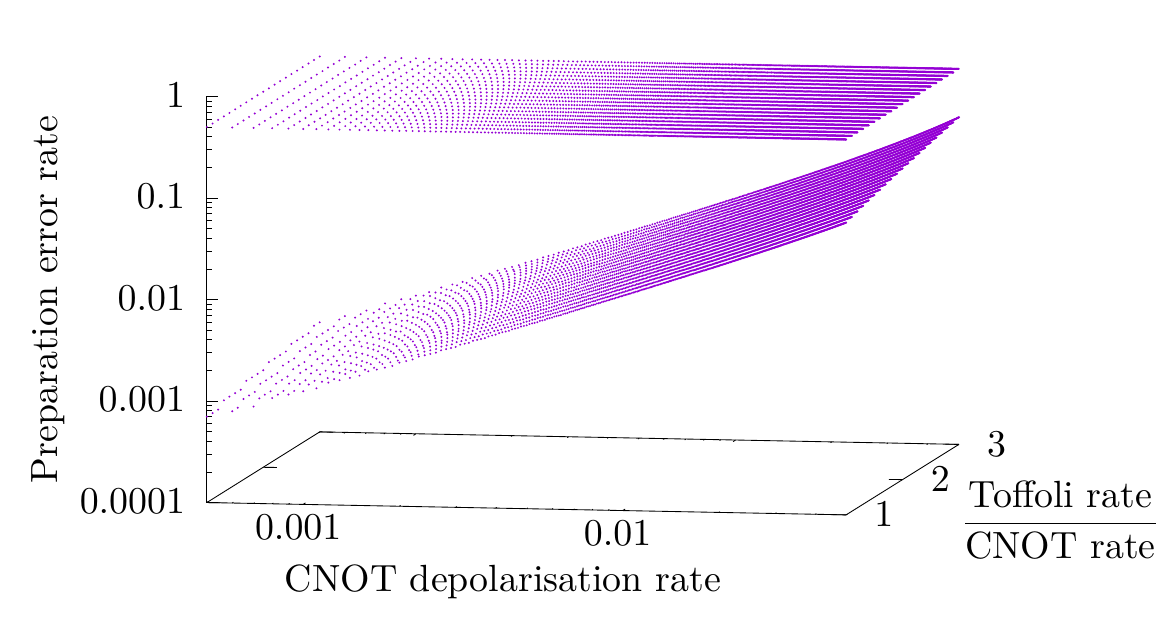}
\caption{Parameter sets between these two surfaces lead to improvements when running the $(3,1,1)$ circuit.}
\label{fig:(3,1,1)region}
\end{figure}

Let $\threshold(\idle, \cnot, \toffoli) = \inf \{\prep : \pout(\prep,\idle,\cnot,\toffoli)=\prep\}$ be the lower threshold preparation error rate for obtaining an improvement.
In the setting of Figures~\ref{fig:(7,4,1)region} and~\ref{fig:(3,1,1)region}, $\theta$ corresponds to the lowest intersection point of each vertical line with the boundary surface.
In particular, $\threshold$ is not necessarily continuous.

Figure~\ref{fig:threshold-3} shows the threshold $\theta(0.001,\cnot,\toffoli)$ for the $(3,1,1)$ circuit.
The $x$-axis shows $\cnot$, the $y$-axis shows the ratio $\toffoli/\cnot$ of Toffoli to CNOT error rate and the $z$-axis, represented by colour, shows $\theta$.
We include contour lines for $\prep \in \{0.003, 0.01, 0.03, 0.1\}$.
If your CNOT and Toffoli error rates place you on a contour, running the $(3,1,1)$ circuit produces output of the same quality as the input at that value of $\prep$.
If your CNOT and Toffoli error rates place you below-left of a contour, you expect to obtain an improvement from running a circuit at that value of $\prep$.
Examination of plots like Figures~\ref{fig:(7,4,1)region} and~\ref{fig:(3,1,1)region} for each circuit show that this holds in practice.
It would be interesting to find general conditions under which output error rate is an increasing function of gate error rate.
We return to this question in Section~\ref{sec:monotonicity}.

Figure~\ref{fig:threshold-3} shows that the $(3,1,1)$ circuit can improve output qubit quality even for high gate failure rates; for example, a CNOT failure rate of 1\% and a Toffoli failure rate of 3\% suffices to improve a preparation error rate of 3\% (although in practice you would like lower gate errors to obtain a meaningful improvement).
The $(5,1,1)$, $(7,1,2)$ and $(9,1,3)$ circuits have the same thresholds as the $(3,1,1)$ circuit: if applying the $(3,1,1)$ circuit reduces error rates, then applying it multiple times also reduces error rates.

\begin{figure}
\centering
\includegraphics[scale=0.95]{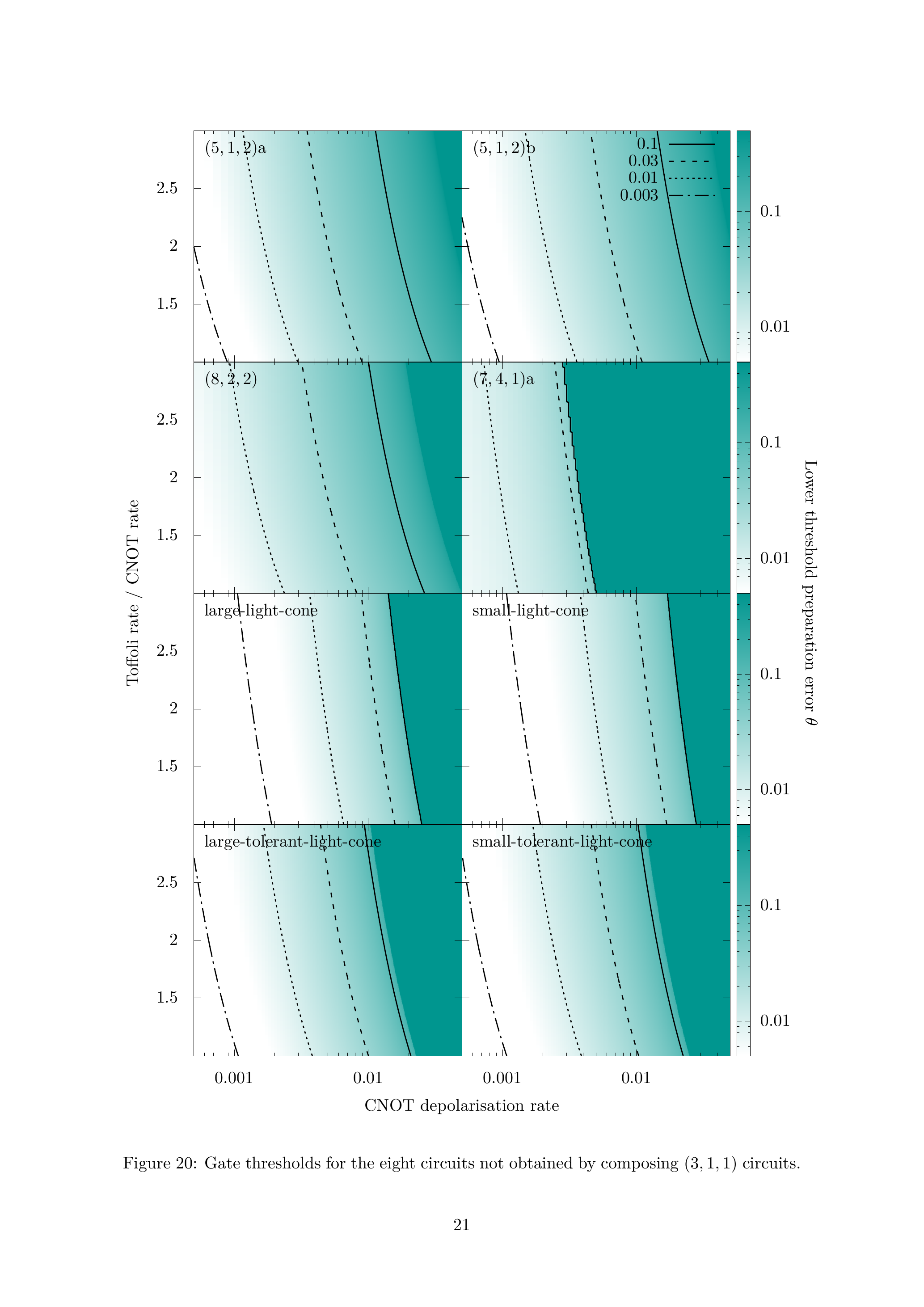}
\caption{
Thresholds for the eight circuits not obtained by composing $(3,1,1)$ circuits.
Each subplot shows, for each value of $\cnot$ and $\toffoli/\cnot$, the least $\prep$ beyond which running the corresponding circuit can produce output of higher quality than the input.
If your gate errors places you down and to the left of a contour, then you expect an improvement from running the corresponding circuit.
See Section~\ref{sec:noisy-gates} for a complete description.
}
\label{fig:thresholds}
\end{figure}

Thresholds for the remaining eight circuits are plotted in Figure~\ref{fig:thresholds}.
Note that many of the thresholds exhibit the discontinuity---a sharp step up to $0.5$---that we should expect from Figure~\ref{fig:(7,4,1)region}.
Figures~\ref{fig:noisy} and~\ref{fig:thresholds} suggest that the basic path/cycle construction of Figure~\ref{fig:graph} provides very good performance at a modest factor of two overhead in number of qubits.

\subsection{Monotonicity}
\label{sec:monotonicity}

We might expect that running a purification circuit with lower gate and preparation error rates always produces a higher quality output, but Figure~\ref{fig:perfect} shows that this is false for preparation errors and the circuit defined over a cycle.

It is, however, true near $0$.
It follows from \eqref{eqn:perfect} that $\pout(\prep,0,0,0)$ is increasing for $\prep < 1/n$.
A similar argument shows that $\pout(0,\idle,0,0)$, $\pout(0,0,\cnot,0)$ and $\pout(0,0,0,\toffoli)$ are increasing when each failure probability is less than one over the corresponding number of gates.
Since $\pout$ is a polynomial, continuity of the partial derivatives shows that it is increasing in some neighbourhood of $0$.

As remarked earlier, when $\big|\binom n 0 + \cdots + \binom n e\big| = 2^{n-k}$ (so that inequality \eqref{eqn:space-bound} is tight) and $\prep < 0.5$, every acceptable input state (pattern of preparation errors that leads to clean output) has higher probability than every unacceptable input state.
This means that any averaging process, for example an arbitrary sequence of depolarisations, is bad for the output error rate.
Again, this is not enough to show that $\pout$ is increasing in each error rate, as successive depolarisations do not necessarily make things progressively worse.

We conjecture that it should be possible to prove monotonicity under some reasonable set of conditions.
For instance, suppose that $\big|\binom n 0 + \cdots + \binom n e\big| = 2^{n-k}$ and $\prep < 0.5$.
Is $\pout(\prep,\idle,\cnot,\toffoli)$ an increasing function of $\prep$, $\idle$, $\cnot$ and $\toffoli$?
We leave this as an open question.

\section{Outlook}
We have described a variety of relatively convenient low depth circuits compatible with 1D arrays of qubits which substantially improve $\ket 0$ preparation when two-qubit errors are well below preparation errors. 
We believe non-Clifford circuits are also worth exploring for pre- and post-processing of the measurement subsystems that will be used in syndrome extraction and elsewhere in quantum computers.

The practical use of these circuits remains an open question, as tradeoffs in qubit cost and connectivity, as well as the cost in time and energy for physical resets of unused qubits, depend substantially upon the physical architecture.
We would very much like to see data on the practical effect of running our circuits on real quantum computing hardware.

\newpage
\appendix

\section{Existence of purification circuits}
\label{app:existence}

In this section we prove Proposition~\ref{prop:existence} on the existence of purification circuits with given parameters.
We use the following algebraic fact.

\begin{lemma}\label{lem:alternating}
Let $G_n$ be the group generated by CNOT and Toffoli gates on $n$ qubits.
For $n \geq 4$, $G_n$ acts as the alternating group on $\mathbb F_2^n\setminus\{0\}$.
For $n \leq 3$, $G_n$ acts as the symmetric   group on $\mathbb F_2^n\setminus\{0\}$.
\end{lemma}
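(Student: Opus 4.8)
The plan is to identify $G_n$ inside $\Sym(\mathbb F_2^n\setminus\{0\})$ --- which contains it because every CNOT and Toffoli fixes the zero vector --- by trapping it between a parity upper bound and a primitivity lower bound. Write $N=2^n-1$. A CNOT acts on $\mathbb F_2^n$ by $x\mapsto x+x_ie_j$, hence swaps $x$ with $x+e_j$ for the $2^{n-1}$ vectors with $x_i=1$ and is a product of $2^{n-2}$ disjoint transpositions, none involving $0$; a Toffoli $x\mapsto x+x_ix_ke_j$ is likewise a product of $2^{n-3}$ transpositions. So for $n\geq4$ every generator is even and $G_n\leq\Alt(\mathbb F_2^n\setminus\{0\})$, whereas for $n=3$ the Toffoli $x\mapsto x+x_1x_2e_3$ is a single transposition. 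For the lower bound: the CNOTs are the elementary transvections, so they already generate $GL_n(\mathbb F_2)$, which is $2$-transitive on $\mathbb F_2^n\setminus\{0\}$ because any two distinct nonzero $\mathbb F_2$-vectors are automatically independent and so extend to bases. Hence $G_n$ is $2$-transitive, in particular primitive.

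The small cases follow from Jordan's classical theorems. $G_1$ is trivial; $G_2=GL_2(\mathbb F_2)=\Sym(\mathbb F_2^2\setminus\{0\})$ directly; and $G_3$ is primitive and contains a transposition, so (a primitive group containing a transposition being symmetric) $G_3=\Sym(\mathbb F_2^3\setminus\{0\})\cong S_7$. For $n\geq4$ the parity bound already forces the alternating group, so by Jordan's theorem that a primitive group containing a $3$-cycle contains the alternating group, it is enough to exhibit a single $3$-cycle of $\mathbb F_2^n\setminus\{0\}$ in $G_n$; I would do this by induction on $n$, the base case $n=3$ being the result just established.

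For the inductive step, a direct hunt for a circuit realising a single $3$-cycle seems hopeless --- any subcircuit on $m<n$ wires acts on $\mathbb F_2^n$ with a pattern repeated $2^{n-m}$ times --- so I would instead build a permutation supported on a large but structured set and take a normal closure. Partition $\mathbb F_2^n\setminus\{0\}=Y_0\sqcup Y_1'\sqcup\{e_n\}$ with $Y_0=\{x:x_n=0,\ x\neq0\}$, $Y_1'=\{x:x_n=1,\ x\neq e_n\}$, and let $\psi$ be the CNOT from wire $1$ to wire $2$ followed by the Toffoli with controls $n,1$ and target $2$; then $\psi$ computes $x\mapsto x+x_1(1+x_n)e_2$, so it acts on $Y_0$ as $\sigma_0$ --- the restriction to nonzero vectors of the CNOT $x\mapsto x+x_1e_2$ on the first $n-1$ wires, a nonidentity permutation consisting of $2^{n-3}$ transpositions (even for $n\geq4$) --- and as the identity on $Y_1'\cup\{e_n\}$. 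Now conjugate $\psi$ by the copy of $G_{n-1}$ sitting inside $G_n$ as the gates on wires $1,\dots,n-1$, whose induced action on $Y_0$ (via $(y,0)\leftrightarrow y$) is exactly the action of $G_{n-1}$ on $\mathbb F_2^{n-1}\setminus\{0\}$: the conjugates still fix $Y_1'\cup\{e_n\}$ pointwise and act on $Y_0$ as all conjugates of $\sigma_0$ by permutations ranging (by the inductive hypothesis) over $\Sym(Y_0)$ or $\Alt(Y_0)$. The subgroup of $\Sym(Y_0)$ these generate is therefore a nontrivial subgroup of $\Alt(Y_0)$ normalised by $\Alt(Y_0)$, hence all of $\Alt(Y_0)$ by simplicity ($|Y_0|=2^{n-1}-1\geq7$). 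In particular $G_n$ contains a genuine $3$-cycle of $\mathbb F_2^n\setminus\{0\}$, so Jordan's theorem together with the parity bound gives $G_n=\Alt(\mathbb F_2^n\setminus\{0\})$. The part I expect to need the most care is exactly this construction --- choosing the right ``half-conditioned'' element $\psi$ and checking its action on the two blocks, and confirming the normal closure is all of $\Alt(Y_0)$ --- while the parity count, the linear-algebra remark, and the appeals to Jordan's theorems are routine.
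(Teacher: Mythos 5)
Your argument is correct, and its skeleton --- $C_n \cong \mathrm{GL}(n,2)$ is $2$-transitive on $\mathbb F_2^n\setminus\{0\}$ hence primitive, exhibit a transposition ($n=3$) or $3$-cycle ($n\geq 4$), invoke Jordan's theorems, and close with the parity count of the generators --- is exactly the paper's. Where you genuinely diverge is in producing the $3$-cycle for $n\geq 4$: the paper simply cites a known theorem that the group generated by Toffolis alone acts as $\Alt(2^n-n-1)$ on the weight-$\geq 2$ vectors (and as $\Sym(4)$ for $n=3$), which immediately supplies the required $3$-cycle (resp.\ transposition) fixing everything of weight $\leq 1$; you instead build one from scratch by induction, using the ``half-conditioned'' gate $\psi:x\mapsto x+x_1(1+x_n)e_2$ supported on the hyperplane $x_n=0$, taking its conjugates under the embedded copy of $G_{n-1}$, and appealing to simplicity of $\Alt(Y_0)$. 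Your route checks out --- the parity of $\sigma_0$ as a product of $2^{n-3}$ transpositions, the pointwise fixing of $\{x:x_n=1\}$ by all the conjugates, and $|Y_0|=2^{n-1}-1\geq 7$ are all correct --- and it has the virtue of being self-contained, resting only on elementary group theory plus the simplicity of $\Alt(m)$; the cost is that it is much longer than a citation and proves more than needed (all of $\Alt(Y_0)$, when any even nonidentity element of proper support would feed Jordan's theorem equally well). The paper's version is the quicker path given the Toffoli-group result as a black box; yours is preferable if one wants the lemma to stand alone.
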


\begin{proof}[Proof of Proposition~\ref{prop:existence}]
By \eqref{eqn:space-bound}, $1 + n \leq 2^{n-k}$, so we may assume that $n > 1$.
Then $3 \leq 2^{n-k}$, and so $k \leq n-2$.

Let $\sigma$ be any permutation of $\mathbb F_2^n \setminus \{0\}$ which maps the vectors of weight at most $e$ into the vectors that vanish in the first $k$ positions.
Let $\tau$ the transposition that swaps the standard basis vectors $e_n$ and $e_{n-1}$.
Since $k \leq n-2$, $\tau\sigma$ also maps the vectors of weight at most $e$ into the vectors that vanish in the first $k$ positions.
One of these permutations is even, so by Lemma~\ref{lem:alternating} can be expressed in terms of CNOTs and Toffolis.
\end{proof}

\begin{proof}[Proof of Lemma~\ref{lem:alternating}]
For $n=1$, all the groups in question are trivial.

For $n \geq 2$, let $C_n$ be the group generated by CNOTs on $n$ qubits.
$C_n$ is isomorphic to $\mathrm{GL}(n,2)$ acting naturally on $\mathbb F_2^n\setminus\{0\}$. 
For $n=2$, this achieves the action of the full symmetric group on $\mathbb F_2^2\setminus\{0\}$. 

For $n \geq 3$, let $T_n$ be the group generated by Toffolis on $n$ qubits.
$T_n$ acts on the vectors of weight at least $2$.
For $n=3$, it acts as the symmetric group $\Sym(4)$; for $n \geq 4$ it acts as the alternating group $\Alt(2^n-n-1)$~\cite[Theorem~1.1.4]{gajewski}.

Now consider the group $G_n = \langle C_n, T_n\rangle$.
\begin{itemize}
\item
The action of $G_n$ on $\mathbb F_2^n\setminus\{0\}$ is $2$-transitive (because $\mathrm{GL}(n,2)$ is $2$-transitive) and therefore primitive.
\item 
The action of $G_n$ on $\mathbb F_2^n\setminus\{0\}$ contains a $3$-cycle ($2$-cycle) if $n\geq 4$ ($n=3$), since $T_n$ does.
\end{itemize}
As $G_n$ acts primitively on $\mathbb F_2^n\setminus\{0\}$ and contains a $3$-cycle ($2$-cycle) if $n\geq 4$ ($n=3$), it contains the alternating group (symmetric group) on $\mathbb F_2^2\setminus\{0\}$~\cite[Theorem 3.3A]{dixmort}. 
A CNOT or Toffoli gate on at least 4 qubits is an even permutation of the basis states, so this containment is equality.
\end{proof}

\section{An explicit construction}
\label{app:explicit}

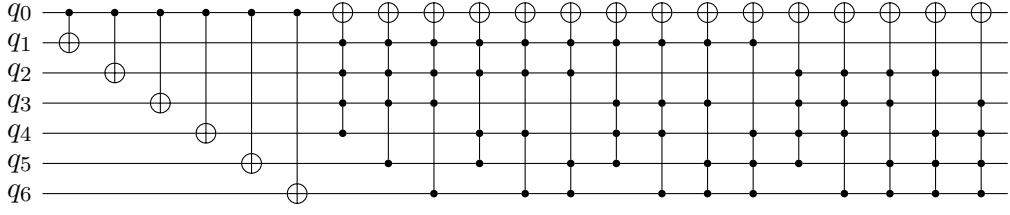
\begin{figure}
\centering
\begin{tikzpicture}
  \begin{yquant}
    qubit {$q_\idx$} q[7];
    cnot q[1] | q[0];
    cnot q[2] | q[0];
    cnot q[3] | q[0];
    cnot q[4] | q[0];
    cnot q[5] | q[0];
    cnot q[6] | q[0];
    cnot q[0] | q[1], q[2], q[3], q[4];
    cnot q[0] | q[1], q[2], q[3], q[5];
    cnot q[0] | q[1], q[2], q[3], q[6];
    cnot q[0] | q[1], q[2], q[4], q[5];
    cnot q[0] | q[1], q[2], q[4], q[6];
    cnot q[0] | q[1], q[2], q[5], q[6];
    cnot q[0] | q[1], q[3], q[4], q[5];
    cnot q[0] | q[1], q[3], q[4], q[6];
    cnot q[0] | q[1], q[3], q[5], q[6];
    cnot q[0] | q[1], q[4], q[5], q[6];
    cnot q[0] | q[2], q[3], q[4], q[5];
    cnot q[0] | q[2], q[3], q[4], q[6];
    cnot q[0] | q[2], q[3], q[5], q[6];
    cnot q[0] | q[2], q[4], q[5], q[6];
    cnot q[0] | q[3], q[4], q[5], q[6];
  \end{yquant}
\end{tikzpicture}
\caption{A $(7,1,3)$ purification circuit presented using multiply controlled Toffoli gates.}
\label{fig:(7,1,3)}
\end{figure}

In this section we give an explicit construction of purification circuits with parameters $(2^{m+1}-1, 1, 2^m-1)$.

Let $n = 2^{m+1}-1$ and $e = 2^m-1$.
Let $q_0$ be an output qubit and $q_1, \ldots, q_{n-1}$ be auxiliary qubits.
Form a circuit in two stages.
\begin{itemize}
\item
In the first stage, perform $n-1$ CNOTs controlled on $q_0$ and targeting each other $q_i$.
\item
In the second stage, perform $\binom{n-1}{e+1}$ Toffolis targeting $q_0$ and controlled on each set of $e+1$ qubits from $\{q_1, \ldots, q_{n-1}\}$.
\end{itemize}

For $m=1$ this is the $(3,1,1)$ circuit in Figure~\ref{fig:(3,1,1)}.
For $m=2$ it is the $(7,1,3)$ circuit shown in Figure~\ref{fig:(7,1,3)};
recall from Section~\ref{sec:small-circuits}  that searching for a shortest $(7,1,3)$ circuit by brute force was out of reach.

As presented, this circuit already has a large number of gates.
If the multi-controlled Toffolis are expanded to conventional Toffolis, this number will increase further, the exact increase depending on how the expansion is performed and the original order of the multi-controlled Toffolis.

\begin{proposition}
This circuit is an $(n,1,e)$ purification circuit.
\end{proposition}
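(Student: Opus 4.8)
The plan is to trace the action of the circuit on an arbitrary input bitstring $x=(x_0,x_1,\ldots,x_{n-1})\in\mathbb F_2^n$ of weight at most $e$ and verify that the first coordinate of the output equals $0$. Since the circuit is a composition of CNOTs and (multiply-controlled) Toffolis it is automatically a permutation of $\mathbb F_2^n$, so this is all that needs checking in order to conclude that it is an $(n,1,e)$ purification circuit.

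First I would analyse the first stage. Every CNOT is controlled on $q_0$, whose value $x_0$ is never altered, so the stage does nothing if $x_0=0$ and flips every auxiliary bit if $x_0=1$. Let $w$ denote the number of $1$s among $q_1,\ldots,q_{n-1}$ after this stage. If $x_0=0$ then $w=x_1+\cdots+x_{n-1}\leq e$. If $x_0=1$ then the input weight bound gives $x_1+\cdots+x_{n-1}\leq e-1$, so $w=(n-1)-(x_1+\cdots+x_{n-1})\geq n-e=2^m=e+1$, and also $w\leq n-1=2^{m+1}-2$. Next I would analyse the second stage: every Toffoli targets $q_0$ and is controlled only on auxiliary qubits, which remain untouched throughout the stage, so the net effect is to flip $q_0$ once for each $(e+1)$-subset of $\{q_1,\ldots,q_{n-1}\}$ all of whose members are currently $1$, i.e.\ $q_0$ is flipped exactly $\binom{w}{e+1}$ times. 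When $x_0=0$ we have $w\leq e<e+1$, hence $\binom{w}{e+1}=0$ and the output first coordinate stays $0$. When $x_0=1$ the output first coordinate is $1\oplus\big(\binom{w}{e+1}\bmod 2\big)$, so it suffices to show that $\binom{w}{e+1}=\binom{w}{2^m}$ is odd for every $w$ with $2^m\leq w\leq 2^{m+1}-2$.

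This last point is the only step requiring an idea rather than bookkeeping, and Lucas' theorem settles it at once: the binary expansion of $2^m$ has a single nonzero digit, in position $m$, so $\binom{w}{2^m}$ is odd precisely when the $m$-th binary digit of $w$ is $1$; and every $w$ in the range $2^m\leq w\leq 2^{m+1}-2$ equals $2^m$ plus an integer in $[0,2^m-2]$, so its $m$-th digit is indeed $1$. I do not anticipate any real obstacle; the care lies entirely in propagating the weight bound through the first stage, where the choice $n=2^{m+1}-1$, $e=2^m-1$ is exactly what makes the inequality $w\geq e+1$ hold, with equality attained, so that the range of $w$ lands precisely in the window where $\binom{w}{2^m}$ is odd.
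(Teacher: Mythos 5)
Your proof is correct and follows essentially the same route as the paper's: handle the $x_0=0$ case by noting no gate activates, and in the $x_0=1$ case count the activated Toffolis as $\binom{w}{2^m}$ with $2^m\le w\le 2^{m+1}-2$ and show this count is odd. The only difference is the final parity step, where you invoke Lucas' theorem while the paper proves the same fact (writing $w=2^m+s$ with $0\le s<2^m$) by a short induction on $s$ comparing powers of $2$ in $\tfrac{2^m+s}{s}$; the two arguments are interchangeable.
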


\begin{proof}
Let $0 \leq r \leq e$ and let $v$ be a vector of weight $r$.
We must show that the circuit maps $v$ to a vector which is $0$ in the first position.

If $v_0 = 0$, then every gate of the circuit fixes $v$, so the output $0$ is preserved.

If $v_0 = 1$, then all of the CNOTs activate, mapping $v$ to a vector $w$ with $w_0$ = 1 and $(n-1) - (r-1) = 2e + 1 - r \geq e+1$ of bits $1, \ldots, n-1$ set.
This will activate
\[
\binom{2e + 1 - r}{e+1} = \binom{2e + 1 - r}{e-r} = \binom{2^m + s} s
\] 
of the Toffolis, where $0 \leq s = e-r \leq e < 2^m$.
We claim that this number is odd, from which the result follows.

We use induction on $s$.
For $s=0$ the number is $1$, so odd.
For $s > 0$, we have
\[
\binom{2^m + s} s = \frac {2^m + s} s \binom{2^m + s - 1} {s-1},
\]
where by induction the binomial coefficient on the right-hand side is odd.
Since $s < 2^m$, the number of powers of 2 dividing the numerator and denominator of the fraction agree, so it has the form $a/b$ where both $a$ and $b$ are odd.
Hence the left-hand side is odd, as required.
\end{proof}

\section{Fault-tolerance of the enhanced cycle construction}\label{app:tolerant}

In this section we prove Theorem~\ref{thm:tolerant}, that the enhanced cycle construction is fully fault-tolerant.

\begin{proof}[Proof of Theorem~\ref{thm:tolerant}]
A path on two vertices only has one output, so is automatically fault-tolerant.
Otherwise a path behaves like a cycle on the same number of vertices where one nominated edge never has an error on the input, so we prove the result for cycles.

Let $V$ be the set of vertices with an error on the input, $E$ the set of edges with an error on the input and $F$ the set of edges with an error on the output.
We must show that $|F| \leq |V| + |E|$, or equivalently that
\begin{equation}\label{eqn:injection}
    |F \setminus E| \leq |V| + |E \setminus F|.
\end{equation}

Fix an orientation of the cycle.
We will show that, starting from any element of $F\setminus E$ and walking round the cycle clockwise, we encounter an element of $V \cup (E \setminus F)$ before another element of $F \setminus E$.
This suffices to prove \eqref{eqn:injection}.

Let $xyz$ be a sequence of three consecutive vertices moving clockwise around the cycle.
Suppose that $xy \in  F \setminus E$ but $y \notin V$.
For $xy$ to have an error on the output but not the input, $y$ must be in state $\ket 1$ after the detect$'$ stage.
Since $y$ and $xy$ have no errors on the input, $yz$ must have an error on the input.
If $yz \in E \setminus F$ then we are done, so assume that $yz \in E \cap F$.
But now, by construction, an input error on an edge cannot persist in the output if there are no input errors on its endpoints.
There is no input error on $y$, so we must have $z \in V$.
\end{proof}

\section{Searching for circuits}\label{app:searching}

As described in Section~\ref{sec:existence} and Appendix~\ref{app:existence}, CNOT and Toffoli gates on a set of $n$ qubits act as permutations on the computational basis, which we identify with $\mathbb F_2^n$.
Let $\Pi_n$ be the set of these CNOT and Toffoli permutations.
An $(n,k,e)$ purification circuit is a composition of elements of $\Pi_n$ which maps the Hamming ball $B(0,e)$ of radius $e$ about $0$ into the codimension $k$ subspace $V_{n-k}$ of $\mathbb F_2^n$ comprising the vectors which vanish in the first $k$ positions.
By Proposition~\ref{prop:existence}, such a composition exists provided $|B(0,e)| \leq |V_{n-k}|$.

We can find such compositions as follows.
Let $\powerset(\mathbb F_2^n)$ be the power set of $\mathbb F_2^n$ and, for a set $S$, let $\binom S m$ be the set of subsets of $S$ of size $m$.
For sets of states $S \in \powerset(\mathbb F_2^n)$, sets of sets of states $\mathcal S \subseteq \powerset(\mathbb F_2^n)$ and permutations $\pi \in \Pi_n$, let
\begin{itemize}
\item $\pi(S) = \{\pi(s) : s \in S\}$
\item $\Pi_n (\mathcal S) = \{\pi(S) : \pi \in \Pi_n, S \in \mathcal S\}$.
\end{itemize}

If $\Pi_n^{t_1}(\{B(0,e)\})$ intersects $\Pi_n^{t_2}\big(\binom {V_{n-k}} {|B(0,e)|}\big)$ then, for some $\pi_i, \rho_i \in \Pi_n$ and some $S \subseteq V_{n-k}$ of size $|B(0,e)|$,
\[
\pi_1 \cdots \pi_{t_1} (B(0,e)) = \rho_1 \cdots \rho_{t_2}(S)
\]
whence, using the fact that every element of $\Pi_n$ is self-inverse,
\[
\rho_{t_2} \cdots \rho_{1} \pi_1 \cdots \pi_{t_1} (B(0,e)) = S \subseteq V_{n-k}.
\]
That is, there is an $(n,k,e)$ purification circuit of length $t_1 + t_2$.

We expect to find such an intersection point once
\[
\big|\Pi_n^{t_1}(\{B(0,e)\})\big|
\times\big|\Pi_n^{t_2}\big(\textstyle\binom {V_{n-k}} {|B(0,e)|}\big)\big|
\approx \textstyle\binom {2^n} {|B(0,e)|},
\]
so time and memory requirements scale roughly as the square root of $\binom {2^n} {|B(0,e)|}$.
This contrasts with a naive search over circuits of length $t_1+t_2$, which requires at least an expected $\binom {2^n} {|B(0,e)|}$ time but constant memory.

The first method has an additional advantage over a naive search.
There are typically multiple circuits taking one set of states to another set of states.
This is in part due to the existence of multiple circuits implementing the same permutation (obtained, for example, by commuting gates past each other), and in part due to the fact that distinct permutations of states can have the same action on sets of states (for example, the two inequivalent circuits in Figure~\ref{fig:(5,1,2)} which both act as $(5,1,2)$ purification circuits).
By tracking reachable sets of states rather than circuits, the first method experiences a limited overhead from this phenomenon.
A brute force search by contrast might be able to avoid the most obvious cases of repetition (for example, by not applying a pair of 
commuting gates in both orders), but greater care would be required to avoid duplicated effort.

In~\cite{sourcecode}, \texttt{Search.hs} implements this method by applying generic pathfinding functions from \texttt{Pathfinding.hs} to the gate set described in \texttt{Circuits.hs}.

\section{Assessing circuits}\label{app:assessing}

We want to assess the performance of a purification circuit with $n$ qubits, $g_1$ idle gates, $g_2$ CNOTs and $g_3$ Toffolis.
With the error model described in Section~\ref{sec:error-model}, at every point during the operation of the circuit the quantum state can be described by a probability distribution $\pi$ over the computational basis states, which we identify with $\mathbb F_2^n$.
Immediately after preparing all $n$ qubits this distribution is
\begin{equation}\label{eqn:prob-initialisation}
\mu_0(x) = \prep^{\| x \|} (1-\prep)^{n - \| x \|},
\end{equation}
where $\|x\|$ is the Hamming weight of $x$, the number of coordinates in which it takes the value~$1$.

Suppose that we have applied $m$ gates, achieving a distribution $\mu_m$ over $\mathbb F_2^n$, and are about to apply a gate $G$ acting on the set of qubits $Q$.
Let $p_G$ be the probability that $G$ depolarises the qubits in $Q$.
Then
\begin{equation}\label{eqn:prob-iteration}
\mu_m(x) = (1-p_G) \mu_m(G^{-1}(x)) + \frac{p_G}{2^{|Q|}}\sum_{v \in \mathbb F_2^Q} \mu_m(x+v).
\end{equation}
Performing this iteration computes the distribution over the computational basis at the end of the circuit in $O(2^n(g_1 + g_2 + g_3))$ polynomial arithmetic operations.
The polynomials in question have up to $(g_1+1)(g_2+1)(g_3+1)$ terms, so in practice the additional overhead from polynomial arithmetic can be significant.

To reduce the overhead we make two choices.
From~\eqref{eqn:prob-initialisation} and~\eqref{eqn:prob-iteration} it follows that $\mu_m$ is a sum of terms of the form
\[
\prep^{f_0}(1-\prep)^{n-f_0}(\idle/2)^{f_1}(1-\idle)^{g_1'-f_1}(\cnot/4)^{f_2}(1-\cnot)^{g_2'-f_2}(\toffoli/8)^{f_3}(1-\toffoli)^{g_3'-f_3},
\]
where $g_1'$, $g_2'$ and~$g_3'$ are the number of idle, CNOT and Toffoli gates applied so far (thus $m = g_1'+g_2'+g_3'$).
Using this basis rather than the standard monomial basis means that the first polynomial multiplication in \eqref{eqn:prob-iteration} is implicit, and the second is an increment of $f_1$, $f_2$ or~$f_3$.

The second choice is to observe that large values of $f_i$ correspond to events that are unlikely, depending on the values of $n, g_1, g_2, g_3, \prep, \idle, \cnot, \toffoli$ and our desired accuracy.
So we need only record terms where each $f_i$ is in some restricted range.

In~\cite{sourcecode}, \texttt{Simulate.hs} implements this method over the description of circuits in \texttt{Circuits.hs}.

\bibliographystyle{plainnat}
\bibliography{purification}

\end{document}